\begin{document}
\title{A Type Checking Algorithm for Higher-rank, Impredicative and Second-order Types}
\author{Peng Fu}
\institute{Dalhousie University}
           

\maketitle
\begin{abstract}
  We study a type checking algorithm that is able to type check a nontrivial subclass of
  functional programs that use features such as higher-rank, impredicative and
  second-order types. The only place the algorithm requires type annotation is
  before each function declaration. 
  We prove the soundness of the type checking algorithm with
  respect to System $\mathbf{F}_{\omega}$, i.e. if the program is type checked,
  then the type checker will produce a well-typed annotated System $\mathbf{F}_{\omega}$ term.
  We extend the basic algorithm to handle pattern matching and let-bindings.
  We implement a prototype type checker and test it on a variety of functional
  programs.

\end{abstract}

\section{Introduction}
\label{intro}
In the paper \textit{De Bruijn notation as a nested datatype} \cite{bird1999bruijn}, Bird and Paterson defined a version of generalized fold, which has the following type:

\begin{verbatim}
gfoldT :: forall m n b . 
            (forall a . m a -> n a) -> 
            (forall a . n a -> n a -> n a) ->
       	    (forall a . n (Incr a) -> n a) -> 
            (forall a . Incr (m a) -> m (Incr a)) ->
            Term (m b) -> n b  
\end{verbatim}

\noindent Note that the quantified type variables \texttt{n} and \texttt{m} are of the kind \texttt{* -> *}. Moreover, \texttt{Term} and \texttt{Incr} are type constructors of kind \texttt{* -> *}. Although
the type variables \texttt{n} and \texttt{m} have kind \texttt{* -> *}, due to the limitations of
the type inference, they cannot be instantiated
with second-order types such as \texttt{\string\ a . a} or \texttt{\string\ a . String}\footnote{Since it is a type level lambda abstraction, we use \texttt{\string\ a . a} instead of \texttt{\string\ a -> a}.}. As a result, in order to use \texttt{gfoldT} in these situations, one has to duplicate the definition of \texttt{gfoldT} and give it a more specific type. If
we have a type checker that supports a limited form of second-order types, then this kind of code
duplication can be avoided.

A similar situtation also arises when using impredicative types. We know that in principle all the recursive functions can be defined using 
a single fixpoint combinator and pattern matching. But the following definition of \texttt{length2} will not pass the type checker that
does not support impredicative types.

\begin{verbatim}
fix :: forall a . (a -> a) -> a
fix f = f (fix f)  

data Nested :: * -> * where
  NN :: forall a . Nested a
  NCons :: forall a . a -> Nested (List a) -> Nested a

length1 :: forall a . Nested a -> Nat
length1 NN = Z
length1 (NCons x xs) = add one (length1 xs)

length2 :: forall a . Nested a -> Nat
length2 = fix (\ r n -> case n of
                          NN -> Z
                          NCons x xs -> add one (r xs))
\end{verbatim}

\noindent Note that the function \texttt{length1} is counting the number of \texttt{NCons}. This is
an example of polymorphic recursion \cite{mycroft1984}, i.e.
the recursive call of \texttt{length1}
is at a different type. And \texttt{length2} is just the fixpoint representation
of \texttt{length1}. To type
check \texttt{length2}, we would need to instantiate the type variable \texttt{a}
in the type of \texttt{fix} with the type \texttt{forall a . Nested a -> Nat}, which
is a form of impredicative instantiation. Most type checkers do not support this feature
 because it is undecidable in general \cite{wells1999}.
One way to work around this problem is
to duplicate the code for \texttt{fix} and give it a more specific type. 

\begin{verbatim}
fixLength :: ((forall a . Nested a -> Nat) -> 
                  (forall a . Nested a -> Nat)) ->
              (forall a . Nested a -> Nat)
fixLength f = f (fixLength f)  
length2 :: forall a . Nested a -> Nat
length2 = fixLength (\ r n -> ...)
\end{verbatim}

\noindent For any polymorphic
recursive function, we would need this kind of work-around to obtain its fixpoint representation if the type checker does not support impredicative types\footnote{This problem was observed by Peyton-Jones et. al. \cite{jones2007practical}}. 

The goal of this work is to design a type checking algorithm that supports second-order types and impredicative types. One benefit is that it can reduce
the kind of code duplications we just mentioned. The main technical contents of this paper are the
followings.

\begin{itemize}
\item To accomonadate second-order and impredicative types, we use a specialized version of second-order unification based on Dowek's work on linear second-order unification (\cite{dowek1993},\cite{dowek2001}). We called it Dowek's bidrectional matching algorithm (Section~\ref{bidirectional}), it generalizes the first-order unification and the second-order matching. We prove the algorithm is sound and terminating (Appendix \ref{bisound}).

\item Armed with Dowek's bidirectional matching, we describe a type checking algorithm
  inspired by the goal-directed theorem proving and logic programming (\cite{CoqManualV8}, \cite{nilsson1990logic}). We also develop a
  mechanism to handle a subtle scope problem.
  We prove the type checking algorithm is sound with respect to
  System $\mathbf{F}_{\omega}$ \cite{Girard:72} (Section \ref{tcalg}, Appendix \ref{sound:type}). The soundness proof gives rise to a method to generate annotated $\mathbf{F}_{\omega}$ terms from the input programs, which is implemented in a prototype type checker\footnote{The prototype type checker is available at \url{https://github.com/fermat/higher-rank}}.

\item  We extend the basic type checking algorithm to handle pattern matching and let-bindings (Section \ref{extent}). 
    We test the type checker
  on a variety of programs that use higher-rank, impredicative and second-order types,
  these include Bird and Paterson's program~\cite{bird1999bruijn} and
  Stump's impredicative Church-encoded merge sort~\cite{stump2016efficiency} (Appendix \ref{examples}). 
  
\end{itemize}

\section{The main idea and the challenges}
\label{idea}
Consider the following program~\cite{jones2007practical}. Note that we assume the data constructors \texttt{Z :: Nat} and \texttt{True :: Bool}.

\begin{verbatim}
data Pair :: * -> * -> * where
  Pair :: forall a b . a -> b -> Pair a b
poly :: (forall v . v -> v) -> Pair Nat Bool
poly f = Pair (f Z) (f True)
\end{verbatim}

If we use Hindley-Milner algorithm (\cite{hindley1969}, \cite{milner1978}) without using the type annotation 
for \texttt{poly}, we would have to assume the argument of \texttt{f} has an
unknown type \texttt{x}, which eventually leads to a failed unification of
\texttt{Nat -> Nat} and \texttt{Bool -> Bool}.
Instead, we adopt the well-established goal-directed
theorem proving technique founds in most theorem provers (e.g. Coq \cite{CoqManualV8}). To
prove the theorem \texttt{(forall v . v -> v) -> Pair Nat Bool}, we first assume \texttt{f :: forall v . v -> v}, then we just need to show \texttt{Pair (f Z) (f True) :: Pair Nat Bool}. We
now apply \texttt{Pair :: forall a b . a -> b -> Pair a b} to the goal \texttt{Pair Nat Bool}, this resolves to two subgoals \texttt{f Z :: Nat} and \texttt{f True :: Bool}. We know these two subgoals holds because we have \texttt{f :: forall v . v -> v}. 

In general, to type check a program $\lambda x . e$ with a type $\forall a . T_1 \to T_2$,
we will type check $e : T_2$ assuming $x : T_1$, where the type variable $a$ in $T_1$ and $T_2$
behaves as a constant (called \textit{eigenvariable}). To type check a program $f \ e_1\ ...\ e_n$
with the type $T$, where $f : \forall a . T_1 \to\ ... \ \to T_n \to T'$,
we will first unify $T'$ with $T$ (the type variable $a$ in $T_1,.., T_n, T'$ behaves as \textit{free variable}), obtaining a unifier $\sigma$. Then we will type check $e_i : \sigma T_i$ for $1 \leq i \leq n$. Notice the different behaviors of the quantified type variable $a$ in
the two cases. When a type variable is introduced as an eigenvariable, we call the introduction
\textit{type abstraction}, when a type variable is introduced as a free variable, we call
the introduction \textit{type instantiation}. Although this idea of type checking
works perfectly for the \texttt{poly} example, it is not obvious how it
can be scale to a more general setting. Indeed, we will need to address the following problems.

\begin{itemize}
 

\item \textbf{Finding an adequate notion of unification}. Impredicative polymorphism means
  that a type variable can be instantiated with any type (which includes forall-quantified types). Consider the following
  program.

\begin{verbatim}
data List :: * -> * where
  Nil :: forall a . List a 
  Cons :: forall a . a -> List a -> List a
test :: List (forall a . List a)
test = Nil
\end{verbatim}

To type check \texttt{test}, we need to unify \texttt{List a} and \texttt{List (forall a . List a)},  which is beyond first-order unification as the forall-quantifed type \texttt{forall a . List a} is not a first-order type. 
First-order unification can not work with second-order types neither, consider the following program.

\begin{verbatim}
data Bot :: * where 
data Top :: * where
k1 :: forall p . p Bot -> p Top
k1 = undefined
k2 :: forall p . p Top -> p Top
k2 = undefined
a1 :: Bot -> Top
a1 = k1
a2 :: Top -> Top
a2 = k2  
\end{verbatim}

Note that the type variable \texttt{p} in \texttt{k1, k2} is of the kind \texttt{* -> *}. We should be able to type check \texttt{a1} by instantiating the type variable \texttt{p} in \texttt{k1} with
the type identity \texttt{\string\ a . a}. This would require unifying
\texttt{(p Bot)} and \texttt{Bot}, which is an instance of undecidable second-order unification~\cite{goldfarb1981}. Besides the problem of undecidability, second-order types also raises a concern of type ambiguity. For
example, to type check \texttt{a2}, we can again instantiate the type variable \texttt{p} in \texttt{k2} with the type identity \texttt{\string\ a . a}, but nothing prevents us to instantiate \texttt{p} with the type constant function \texttt{\string\ a . Top}. Thus there
can be two different type annotations (derivations) for \texttt{a2}. 

\textbf{Our approach}. 
Following the usual practice in higher-order unification \cite{dowek2001}, the unifier of $\forall a .T$ and $\forall b. T'$ is the
unifier $\sigma$ of $[c/a]T$ and $[c/b]T'$, provided the variable $c$ is a fresh eigenvariable and $c$ does not appears in the codomain of $\sigma$.
To handle second-order types, we use a decidable version of second-order unification due
to Dowek~\cite{dowek1993}, it generalizes first-order unification and second-order matching.



\textbf{Drawback}. 
The unification algorithm we use could generate multiple (finitely many) unifers when there are second-order type variables.
This implies that there may be multiple
successful typing derivations for a program when it uses second-order type variables.
For the purpose of type checking, it is enough to pick the first successful derivation because all the typing annotations will be erased when we run the program. If all the derivations
fail, then the type checking fails. So second-order types will introduce a kind of nondeterminism during type checking.


\item \textbf{Handling type abstraction}. We know that it is safe to perform type abstraction 
  when we
  are defining a polymorphic function that has at least one input. For the other situations, it is not straightforward to decide at which point to perform type abstraction.
  A common decision is always perform type abstraction for the outermost
  forall-quantified variables. Consider the following program.

\begin{verbatim}
data F :: * -> * where
fix :: forall a . (a -> a) -> a
fix f = f (fix f)  
l :: forall x . F x -> F x
l = undefined
l' :: (forall x . F x) -> (forall x . F x)
l' = undefined 
test1 :: forall y . F y
test1 = fix l
test2 :: forall y . F y
test2 = fix l'
\end{verbatim}

The program \texttt{test1} can be type checked by first abstracting the outermost variable \texttt{y}, then we need to type check \texttt{fix l} with the type \texttt{F y} (with \texttt{y} as an eigenvariable). This is the case because we can
 instantiate the type variable \texttt{a} in the type of \texttt{fix} with \texttt{F y}, and instantiate the quantified variable \texttt{x} with \texttt{y} in the type of \texttt{l}. 
On the other hand, to type check the program \texttt{test2}, 
we must not perform type abstraction. 

 \textbf{Our approach}. To type check both \texttt{test1} and \texttt{test2},
 we decide to branch the type checking when checking an application (which includes single program variable or constructor) against
 a forall-quantified type. Our type checker always performs type abstraction when checking a polymorphic function that has at least one input.
 For example, when checking program such as \texttt{f x .. = e} with the type \texttt{f :: forall a . T}, then we would
 abstract the outermost type variable \texttt{a}. But when 
 checking a application against a polymorphic type, the type checker
 will make two branches, in one branch the type checker will perform type abstraction
 and in the other the type checker does not. 
 For example, when checking
 \texttt{f g} with the type \texttt{forall a . T}, we would check
 both \texttt{f g} against \texttt{forall a . T} and \texttt{f g} against \texttt{T}. 

 \textbf{Drawback}. Our decision on checking an application against a forall-quantified
 type also introduce nondeterminism. When checking
 single program variable or constructor against a polymorphic type, branching
 is at no cost as these are just two additional leaves. But in the other cases
 branching does mean the type checker will do extra work. 
 

\item \textbf{Scope management}.
Consider the following program.

\begin{verbatim}
k1 :: forall q . (forall y . q -> y) -> Bot
k1 = undefined
k2 :: forall x . x -> x
k2 = undefined
test :: Bot
test = k1 k2
\end{verbatim}

The program \texttt{test} appears to be well-typed, as we can instantiate the
variable \texttt{q} in the type of \texttt{k1} with \texttt{y}, then we can apply
\texttt{k1} to \texttt{k2}. But this is not the case because 
\texttt{q} is incorrectly referred to the bound variable \texttt{y}. 
When using our algorithm to check \texttt{k2} against \texttt{forall y . q -> y} ($\texttt{q}$ is a free variable),
in one branch the algorithm will try to unify \texttt{x -> x} with \texttt{forall y . q -> y}, which fails. In another branch, the algorithm will perform
type abstraction, i.e. it will check \texttt{k2} against the type \texttt{q -> y} (\texttt{y}
is an eigenvariable). Since \texttt{x -> x} unifies with \texttt{q -> y} (the unifier is $[\texttt{y}/\texttt{q}, \texttt{y}/\texttt{x}]$), without proper scope management, our algorithm will wrongly report the success on the second branch.  


\textbf{Our approach}. To handle 
the scope problem, we introduce a notion of \textit{scope value} for variables. Informally, when each
variable (free variable or eigenvariable) is first introduced, it will be assigned a scope value. A variable introduced later
will have a scope value larger than a variable introduced earlier. 
When a free variable $a$ is substituted by a type $T$,
we require all the eigenvariables in $T$ to have a smaller scope value compared to $a$'s, i.e. $a$ can only refer to the eigenvariables that are introduced before $a$. So in our example,
when type checking \texttt{test}, the scope value for the free variable \texttt{q} will be $1$
and the scope value for the eigenvariable \texttt{y} will be $2$, which is larger than $1$, hence the substitution $[\texttt{y}/\texttt{q}]$ gives rise to
a scope error. We incorporate a scope checking process into the type checking algorithm, which
is essential for the soundness of the type checking. 

\textbf{Drawback}. When a free variable $a$ is substituted by a type $T$,
other than eigenvariables and constants, $T$ may contain free variables.
The question now is what if these free variables have scope values larger than $a$'s.
For example, suppose the scope value for $a$ is $3$, but $T$ contains
a free variable $b$ with scope value $5$. We allow such substitution, but we need to update
the scope value of $b$ to the smaller value $3$, this is to prevent $b$ (and $a$ indirectly) later refer to any eigenvariable
with the scope value $4$. Thus when a unifier is generated, we need to perform
scope value check as well as updating the scope values. This complicates the presentation of the type checking
algorithm, but we manage to prove that the scope checking and updating ensures soundness.
\end{itemize}

\section{A type checking algorithm for impredicative and second-order types}
We describe a type checking algorithm for higher-rank, impredicative and second-order types in
this section. Higher-rank types means the forall-quantifiers can appear anywhere in a type.
Impredicative types means type variables can be instantiated with any types (includes the forall-quantified types). Second-order types means type variables of kind $* \to\ ... \ \to * \to *$ can be instantiated with the lambda-abstracted types. All of these features are available in System $\mathbf{F}_{\omega}$,
which will be the target language for our type checking algorithm. Note that the type checking problem for
System $\mathbf{F}_{\omega}$ with annotations is decidable. We use the
terminology \textit{proof checking} to mean checking $\mathbf{F}_{\omega}$ with annotations,
and we use the terminology \textit{type checking} to mean giving a type $T$ and a unannotated term $e$, construct an annotated term $p$ in $\mathbf{F}_{\omega}$ such that it can
be proof checked with type $T$ and $p$ can be erased to $e$. Thus our type checking algorithm will always produce an
annotated term if the type checking is successful.

\begin{figure}

  {\footnotesize
  \textit{Annotated Expressions} $p ::= c\ | \ x \ | \ p \ p' \ | \ \lambda x : T . p \ | \ \lambda a . p \ | \ p \ T$

  \textit{Unannotated Expressions} $e ::= c\ | \ x \ | \ e \ e' \ | \ \lambda x . e$
  
  \textit{Types} $T ::= \ C \ | \ a \ |\ \forall a . T \ |\ T \to T' \ | \ T\ T' \ | \ \lambda a. T$

    \textit{Kinds} $K ::= \ * \ |\ K \to K'$
    
    \textit{Type Environment} $\Gamma ::=  \cdot \ | \ \Gamma, a : T \ | \ \Gamma, c : T$

    \textit{Type Equivalence}  $(\lambda a.T)\ T' = [T'/a]T$

    \begin{tabular}{lllll}
\\
\infer{\Gamma \vdash (x|c) : T}{(x|c) : T \in \Gamma}    
&

&

\infer{\Gamma \vdash p_1\ p_2 : T}{\Gamma \vdash p_1 : T' \to T & \Gamma \vdash p_2 : T'}
&

&

\infer{\Gamma \vdash \lambda x : T'. p : T' \to T}{\Gamma, x : T' \vdash p : T}

\\ \\
\infer{\Gamma \vdash \lambda a . p: \forall a . T}{\Gamma \vdash p :  T & a \notin \mathrm{FV}(\Gamma)}
&

&

\infer{\Gamma \vdash p\ T' : [T'/a]T}{\Gamma \vdash p : \forall a . T}

&
&
\infer{\Gamma \vdash p : T'}{\Gamma \vdash p : T & T = T'}

  \end{tabular}  
    }
  \caption{System $\mathbf{F}_{\omega}$}
  \label{fig:systemfo}  
\end{figure}

We recall the standard System $\mathbf{F}_{\omega}$ in Figure \ref{fig:systemfo}. We use $c$ to denote term constant, $C$ to denote the type constants and $a, b$ to denote the type variables. We use $\mathrm{FV}(\Gamma)$ to mean all the free type variables in the environment $\Gamma$.
Since System $\mathbf{F}_{\omega}$ enjoys type level termination, we only need to work with normal form of a type. Note that the kind inference for $\mathbf{F}_{\omega}$ is decidable and we only work with well-kinded types\footnote{The kinding rules for $\mathbf{F}_{\omega}$ is available in Appendix \ref{krule}}. 

\subsection{Bidirectional second-order matching}
\label{bidirectional}
For the purpose of type checking and unification, we make the distinction between \textit{eigenvariables} and \textit{free variables} for the type variables. The type variable that can be substituted during the type checking or unification process is called \textit{free variables}, the variable
that cannot be substituted is called \textit{eigenvariables}.

We use $\mathrm{FV}(T)$ to denote the set of free variables in $T$ and $\mathrm{EV}(T)$ to denote
the set of eigenvariables in $T$. We use $\#$ as a predicate to denote the apartness of two sets and $\mathrm{agree}(\sigma)$
means that if $[T/a] \in \sigma$ and $[T'/a] \in \sigma$, then $T \equiv T'$. We
write $\mathrm{dom}(\sigma)\# \mathrm{codom}(\sigma)$ to means the free variables in the codomain
of $\sigma$ is disjoint with its domain. We say a type variable is first-order if it is of kind $*$. 


\begin{definition}[Dowek's bidirectional second-order matching]
  \label{bi-match}
Let $\theta_n^m(C)$ denote $\lambda a_1....\lambda a_n. C\ (b_1\ a_1 ...\ a_n)\ ...\ (b_m\ a_1 ...\ a_n)$, where $b_1,...,b_m$ are fresh free variables. Let $\pi_n^i$ denote the $i$-th projection $\lambda a_1 .... \lambda a_n . a_i$.  Let $V$ denotes a set of eigenvariables
  and $E$ denotes a set of unification problem $\{T_1 = T_1',. .., T_n = T_n'\}$. 

  We formalize the bidirectional second-order matching as a transition system from $(E, V, \sigma)$ to $(E', V', \sigma')$ in Figure \ref{fig:bi-match}.
 If $(\{T = T'\}, \emptyset, \mathrm{id}) \longrightarrow^* (\emptyset, V, \sigma)$, where $V \# \mathrm{EV}(\mathrm{codom}(\sigma)), \mathrm{dom}(\sigma)\# \mathrm{codom}(\sigma)$ and $\mathrm{agree}(\sigma)$,
  then we say the bidirectional matching is successful (denoted by $T\sim_\sigma T'$), otherwise it fails\footnote{Here $\mathrm{id}$ stands for identity substitution.}. 
  \begin{figure}

    \begin{tabular}{l}
    $(\{a = a, E\}, V, \sigma) \longrightarrow (\{E\}, V, \sigma)$.

    \\
    \\
    
    $(\{a = T, E\}, V, \sigma) \longrightarrow (\{[T/a]E\}, V, [T/a] \cdot \sigma)$,
    if $a$ is first-order, $a \notin \mathrm{FV}(T)$
    and $T \not \equiv a$.
    
    \\
    \\

    $(\{T = a, E\}, V, \sigma) \longrightarrow (\{a = T, E\}, V, \sigma)$. 

    \\
    \\

    $(\{\forall a . T = \forall b.T', E\}, V, \sigma) \longrightarrow_{\mathrm{forall}} (\{[a'/a]T = [a'/b]T', E\}, V\cup \{a'\}, \sigma)$,
    \\
    where $a'$ is a fresh eigenvariable.

    \\
    \\

    $(\{C\ T_1 \ ... \ T_n = C\ T_1'\ ... \ T_n', E\}, V, \sigma) \longrightarrow$
    $(\{T_1 = T_1',..., T_n = T_n', E\}, V, \sigma)$. 

    \\
    \\

    $(\{a\ T_1 \ ... \ T_n = C\ T_1'\ ... \ T_m', E\}, V, \sigma) \longrightarrow_{\mathrm{proj}}$
    $(\{T_i = C\ T_1'\ ... \ T_m', E \}, V, [\pi_n^i/a] \cdot \sigma)$.

    \\
    \\

    $(\{a\ T_1 \ ... \ T_n = C\ T_1'\ ... \ T_m', E\}, V, \sigma) \longrightarrow_{\mathrm{imi}}$
    \\
    $(\{(b_1 \ T_1 \ ... \ T_n)  = T_1', ..., (b_m \ T_1 \ ... \ T_n) = T_m', E, \}, V, [\theta_n^m(C)/a]\cdot \sigma)$,
    \\
    where $b_1, ..., b_m \in \mathrm{FV}(\theta_n^m(C))$.

    \\
    \\





    $(\{C\ T_1'\ ... \ T_m' = a\ T_1 \ ... \ T_n, E\}, V, \sigma) \longrightarrow_{\mathrm{exchange}} (\{a\ T_1 \ ... \ T_n = C\ T_1'\ ... \ T_m', E\}, V, \sigma)$.
    \end{tabular}
    \caption{Bidirectional second-order matching}
        \label{fig:bi-match}
\end{figure}
\end{definition}
The bidirectional matching algorithm in Figure \ref{fig:bi-match} is similar to the standard second-order matching, but it is bidirectional due to the exchange rule $\longrightarrow_{\mathrm{exchange}}$. Moreover, unlike standard second-order matching, in the rules $\longrightarrow_{\mathrm{proj}}$ and $\longrightarrow_{\mathrm{imi}}$, we do not perform substitution on $E$, this ensures the termination of the transition system. 
We also add the $\longrightarrow_{\mathrm{forall}}$ rule to handle the forall-quantified types.
The bidrectional second-order matching is sound and terminating. The rules $\longrightarrow_{\mathrm{proj}}$ and $\longrightarrow_{\mathrm{imi}}$ are overlapped, so there can be
multiple unifiers for a given unification problem. 


\begin{theorem}[Termination and Soundness\footnote{The proof is at Appendix \ref{bisound}}]
  \label{unif:sound}
  The transition system in Figure \ref{fig:bi-match} is terminating. Moreover, 
  if $T\sim_\sigma T'$, then $\sigma T \equiv \sigma T'$.
\end{theorem}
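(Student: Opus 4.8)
The plan is to prove the two assertions separately: termination by exhibiting a well-founded measure that strictly decreases along every transition, and soundness by induction on the length of a successful derivation, propagating ``solvedness'' from the final empty problem back to the original equation.

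For termination I would assign to each configuration $(E,V,\sigma)$ a lexicographically ordered tuple. The decisive component concerns the matching rules $\longrightarrow_{\mathrm{proj}}$, $\longrightarrow_{\mathrm{imi}}$, the decomposition rule for $C\,T_1\cdots T_n$, and $\longrightarrow_{\mathrm{forall}}$: here I would take the multiset, ordered by the Dershowitz--Manna ordering, of the sizes of the rigid (constant- or eigenvariable-headed) sides of the equations in $E$. The point the authors stress, that $\longrightarrow_{\mathrm{proj}}$ and $\longrightarrow_{\mathrm{imi}}$ do not substitute into $E$, is exactly what makes this multiset well behaved: the rigid sides are never re-grown, only decomposed into strict subterms. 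Concretely, $\longrightarrow_{\mathrm{imi}}$ replaces a rigid side $C\,T_1'\cdots T_m'$ by the strictly smaller pieces $T_1',\dots,T_m'$, $\longrightarrow_{\mathrm{forall}}$ strips two quantifiers, and decomposition removes a head $C$, so each strictly decreases this multiset; $\longrightarrow_{\mathrm{proj}}$ leaves the rigid side untouched but shrinks the flexible side $a\,T_1\cdots T_n$ to a subterm $T_i$, which I would record in a secondary multiset of flexible-side sizes. The first-order elimination rule is governed by a further component counting the first-order free variables, which it strictly decreases because it requires $a\notin\mathrm{FV}(T)$; the orientation rules $\longrightarrow_{\mathrm{exchange}}$ and $T=a\mapsto a=T$ are ranked last by a counter of mis-oriented equations. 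Well-foundedness of the lexicographic product then gives termination.

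The hard part of the termination argument is reconciling the first-order elimination rule with $\longrightarrow_{\mathrm{imi}}$: imitation creates fresh metavariables $b_1,\dots,b_m$ and copies the arguments $T_1,\dots,T_n$, so it enlarges both the number of free variables and the total term size, whereas first-order elimination shrinks the variable count but, by substituting into $E$, can enlarge sizes. Thus no single naive ``size'' or ``variable count'' measure can dominate both rules, and I expect the main work to lie in choosing the components and their priority so that the rigid-structure multiset, kept monotone precisely by the no-substitution-into-$E$ discipline, decouples the termination of the genuine second-order matching from that of the first-order part. I would also have to check, using the kind discipline of well-kinded types, that the fresh $b_i$ introduced by $\longrightarrow_{\mathrm{imi}}$ carry arrow kinds and hence do not inflate the first-order-variable count in the cases where that component is the active one.

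For soundness I would prove, by induction on the number of steps in $(\{T=T'\},\emptyset,\mathrm{id})\longrightarrow^*(E,V,\sigma)$, the invariant: every idempotent substitution $\tau$ extending $\sigma$ that unifies each equation in $E$ also satisfies $\tau T\equiv\tau T'$. The base case is immediate since $E$ is $\{T=T'\}$ and $\sigma=\mathrm{id}$. For the step I would verify that each rule preserves the invariant; the interesting case is $\longrightarrow_{\mathrm{imi}}$, where a direct computation with $\theta_n^m(C)=\lambda a_1\cdots a_n.\,C\,(b_1\,a_1\cdots a_n)\cdots(b_m\,a_1\cdots a_n)$ shows that any $\tau$ solving the new equations $b_i\,T_1\cdots T_n=T_i'$ forces $\tau(a\,T_1\cdots T_n)\equiv\tau(C\,T_1'\cdots T_m')$, recovering the consumed equation. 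The $\longrightarrow_{\mathrm{forall}}$ case is where the side condition $V\#\mathrm{EV}(\mathrm{codom}(\sigma))$ is used: since the fresh eigenvariable does not occur in $\mathrm{codom}(\sigma)$, the substitution commutes with the renaming and one may generalize over it to pass from the solved bodies back to $\sigma(\forall a.T)\equiv\sigma(\forall b.T')$. Finally, when the run ends with $E$ empty and the stated conditions $\mathrm{dom}(\sigma)\#\mathrm{codom}(\sigma)$ (idempotence) and $\mathrm{agree}(\sigma)$ ($\sigma$ a genuine substitution) hold, the invariant applied to $\tau=\sigma$, which vacuously unifies the empty $E$, yields $\sigma T\equiv\sigma T'$, as required.
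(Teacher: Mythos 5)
Your proposal assembles essentially the same ingredients as the paper's termination proof (whose lexicographic tuple is $(n_{\mathrm{fvar}}, n_{\mathrm{rcomp}}, n_{\mathrm{svar}}, n_{\mathrm{rr}}, n_{\forall}, n_{\mathrm{eq}})$), but it defers exactly the point where the proof must do its work, and one of your components is in fact not monotone. First, the priority question you leave open has a forced answer: the count of first-order free variables must be the \emph{first} component. First-order elimination substitutes into $E$, so it can enlarge every size-based quantity you propose---including your rigid-side multiset, whenever the eliminated variable occurs inside a rigid side of some other equation---hence no size measure may sit above it; conversely, one checks that no other rule increases the variable count, which is your (correct) arrow-kind observation about the $b_i$ created by imitation, plus the fact that the remaining rules create no free variables. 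Second, your primary multiset of ``sizes of the rigid sides of the equations in $E$'' strictly \emph{increases} under $\longrightarrow_{\mathrm{proj}}$: if the projected argument $T_i$ is rigid-headed, the old equation contributed one rigid side ($C\,T_1'\cdots T_m'$) while the new equation $T_i = C\,T_1'\cdots T_m'$ contributes two, so the top multiset grows and only your lower-priority flexible multiset shrinks. Swapping the priority of your two multisets does not help, because decomposition has the dual behaviour: the rigid--rigid equation $C\,(a\,D) = C\,D'$ decomposes into the flexible--rigid $a\,D = D'$, creating a new flexible side. The paper's resolution is to let $n_{\mathrm{rcomp}}$ count only the rigid components of flexible--rigid equations and to demote rigid--rigid equations to the lower-priority $n_{\mathrm{rr}}$, which fixes the proj case; a more robust repair of your measure (which also avoids the residual bookkeeping subtlety that the paper's own decomposition case runs into) is to place, immediately after the first-order-variable count, the single Dershowitz--Manna multiset of \emph{whole-equation} sizes: every rule except first-order elimination and the two orientation rules replaces one equation by finitely many strictly smaller ones, and the orientation rules are then handled by your last counter.

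Your soundness argument is, modulo packaging, the paper's: the paper inducts on the length of the run by peeling off the first step and applying the induction hypothesis to the tail (which is why it generalizes the statement to an arbitrary starting set of equations), whereas you push a forward invariant quantified over all extending unifiers; the per-rule computations, in particular for $\longrightarrow_{\mathrm{imi}}$, coincide. There is, however, one genuine flaw: as stated, your invariant is not preserved by $\longrightarrow_{\mathrm{forall}}$. From the initial configuration $(\{\forall a . a = \forall b . q\}, \emptyset, \mathrm{id})$ the rule produces $(\{a' = q\}, \{a'\}, \mathrm{id})$ with $a'$ a fresh eigenvariable; the idempotent substitution $\tau = [a'/q]$ extends $\mathrm{id}$ and unifies the new problem, yet $\tau(\forall a . a) \equiv \forall a . a \not\equiv \forall b . a' \equiv \tau(\forall b . q)$, so the invariant fails at the successor configuration. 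The apartness condition you invoke in that case cannot be about $\mathrm{codom}(\sigma)$ (the substitution accumulated so far); it must be built into the invariant's quantification itself: ``for every idempotent $\tau$ extending $\sigma$ with $V \# \mathrm{EV}(\mathrm{codom}(\tau))$ that unifies $E$, one has $\tau T \equiv \tau T'$.'' With this strengthening the offending $\tau$ above is excluded (since $a' \in V$), the forall case goes through, and the success conditions $V \# \mathrm{EV}(\mathrm{codom}(\sigma))$, $\mathrm{dom}(\sigma) \# \mathrm{codom}(\sigma)$ and $\mathrm{agree}(\sigma)$ are exactly what licenses instantiating $\tau := \sigma$ at the final, empty problem.
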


\begin{example}
  Consider the unification problem of $\forall a . a \to a$ and $\forall a . q \to a$.
  They should not be unified. Indeed it will not be a successful becase we will have the following transition:

  \begin{center}
    $(\{\forall a . a \to a = \forall a . q \to a\}, \emptyset, \mathrm{id}) \longrightarrow_{\mathrm{forall}} (\{ a_1 \to a_1 =  q \to a_1\}, \{a_1\}, \mathrm{id}) \longrightarrow (\{  a_1 =  q, a_1 = a_1\}, \{a_1\}, \mathrm{id}) \longrightarrow^* (\emptyset, \{a_1\}, [a_1/q])$
  \end{center}
\noindent  But $[a_1/q]$ is not a unifier because its codomain is not apart from $\{a_1\}$. 
\end{example}

\begin{example}
  Consider the unification problem of $p\ \mathsf{Bot} \to p\ \mathsf{Top}$ and $\mathsf{Bot} \to\mathsf{Top}$.
  The first step of the transition is:
  $(\{p\ \mathsf{Bot} \to p\ \mathsf{Top} = \mathsf{Bot} \to \mathsf{Top}\}, \emptyset, \mathrm{id}) \longrightarrow (\{p\ \mathsf{Bot} = \mathsf{Bot}, p\ \mathsf{Top} = \mathsf{Top} \}, \emptyset, \mathrm{id})$. Then there will be the following four possible transitions, but only the last one is successful. 

  \noindent 1. {\footnotesize $(\{p\ \mathsf{Bot} = \mathsf{Bot}, p\ \mathsf{Top} = \mathsf{Top} \}, \emptyset, \mathrm{id}) \longrightarrow_{\mathrm{imi}} (\{\mathsf{Bot} = \mathsf{Bot}, p\ \mathsf{Top} = \mathsf{Top} \}, \emptyset, [\lambda x. \mathsf{Bot}/p]) $

    $\longrightarrow (\{p\ \mathsf{Top} = \mathsf{Top} \}, \emptyset, [\lambda x. \mathsf{Bot}/p]) \longrightarrow_{\mathrm{imi}} (\emptyset, \emptyset, [\lambda x. \mathsf{Top}/p,\lambda x. \mathsf{Bot}/p])$}

\

\noindent 2. {\footnotesize $(\{p\ \mathsf{Bot} = \mathsf{Bot}, p\ \mathsf{Top} = \mathsf{Top} \}, \emptyset, \mathrm{id}) \longrightarrow_{\mathrm{imi}} (\{\mathsf{Bot} = \mathsf{Bot}, p\ \mathsf{Top} = \mathsf{Top} \}, \emptyset, [\lambda x. \mathsf{Bot}/p]) $

  $\longrightarrow (\{p\ \mathsf{Top} = \mathsf{Top} \}, \emptyset, [\lambda x. \mathsf{Bot}/p]) \longrightarrow_{\mathrm{proj}} (\emptyset, \emptyset, [\lambda x. x/p,\lambda x. \mathsf{Bot}/p])$}

\

\noindent 3. {\footnotesize $(\{p\ \mathsf{Bot} = \mathsf{Bot}, p\ \mathsf{Top} = \mathsf{Top} \}, \emptyset, \mathrm{id}) \longrightarrow_{\mathrm{proj}} (\{\mathsf{Bot} = \mathsf{Bot}, p\ \mathsf{Top} = \mathsf{Top} \}, \emptyset, [\lambda x. x/p])$

  $\longrightarrow (\{p\ \mathsf{Top} = \mathsf{Top} \}, \emptyset, [\lambda x. \mathsf{Bot}/p]) \longrightarrow_{\mathrm{imi}} (\emptyset, \emptyset, [\lambda x. \mathsf{Top}/p,\lambda x. x/p])$}

\

\noindent 4. {\footnotesize $(\{p\ \mathsf{Bot} = \mathsf{Bot}, p\ \mathsf{Top} = \mathsf{Top} \}, \emptyset, \mathrm{id}) \longrightarrow_{\mathrm{proj}} (\{\mathsf{Bot} = \mathsf{Bot}, p\ \mathsf{Top} = \mathsf{Top} \}, \emptyset, [\lambda x. x/p])$

  $\longrightarrow (\{p\ \mathsf{Top} = \mathsf{Top} \}, \emptyset, [\lambda x. \mathsf{Bot}/p]) \longrightarrow_{\mathrm{proj}} (\emptyset, \emptyset, [\lambda x. x/p,\lambda x. x/p])$}

\end{example}

\subsection{The type checking algorithm}
\label{tcalg}
Let $L$ be a list of pairs $(a, n)$, where $a$ is a type variable (free variable or eigenvariable) and $0 \leq n$. We call such $n$ a \textit{scope value}. We write $L(a)$ to mean the scope value of $a$, when $a \notin L$, $L(a)$ is defined to be an arbitrary large value. For a set of variables $S$, we write $L(S)$ to mean the set of its scope values in $L$. We define $\mathrm{max}(L)$ to be the maximum scope value in $L$, if $L$ is empty, then we set $\mathrm{max}(L) = 0$.  
The following definition of \textit{scope check} ensures that the free variables can only
be substituted with the types that contains eigenvariables that are introduced before.  

\begin{definition}[Scope check] We define $\mathsf{Scope}(L, \sigma)$ to be the following predicates: For any $a\in \mathrm{dom}(\sigma)$, if $(a, n) \in L$, then for any $b \in \mathrm{EV}(\sigma a)$, we have $(b, n') \in L$ and $n' < n$. 
\end{definition}


Let $\mathrm{fv}(a, \sigma) = \mathrm{FV}(\mathrm{codom}(\sigma a))$ and $L_{\mathrm{eigen}} = [(a, n) | (a, n) \in L, \mathrm{isEigen}(a)]$. The following definition
of $\sigma L$ will replace the pair $(a, n) \in L$ (where $a \in \mathrm{dom}(\sigma)$),  
by the pairs $(b, n')$, where $b \in \mathrm{fv}(a, \sigma)$ and $n'$ is the minimal one among $n$ and $L(\mathrm{fv}(a, \sigma))$. We use $L + L'$ to mean append $L, L'$. We write
$|L|$ to means a scope environment that has the same variables as $L$, but if a variable
has multiple scope values in $L$, then it will have the minimal one in $|L|$. 

\begin{definition}[Updating]
  \
  
\noindent Let $M^a = \mathrm{min}\{L(a), L(\mathrm{fv}(a, \sigma))\}$ and $L' = [(b, M^a) | a \in \mathrm{dom}(\sigma), b \in \mathrm{fv}(a, \sigma)]$. We define $\sigma L = |L'| + L_{\mathrm{eigen}}$.  
\end{definition}

Let $\Psi ::= \cdot \ | \ (L, \Gamma, e, T), \Psi \ | \  (L, \Gamma, \diamond, \diamond), \Psi$. The tuple $(L, \Gamma, e, T)$ means $e$ is an unannotated program to be type checked with the type $T$ under the scope environment $L$ and the typing environment $\Gamma$. The tuple $(L, \Gamma, \diamond, \diamond)$ means the type checking process for a branch is finished,
with the final scope and typing environment $L$ and $\Gamma$. We now define the type checking
algorithm as a transition system between $(\Psi, \sigma)$. We write $\sigma \Gamma$ to mean
applying $\sigma$ to all the types in $\Gamma$. We use $(x|c)$ to mean a program variable $x$ or a data constructor $c$. Furthermore, $T_1,..., T_n \to T$ means $T_1 \to \ ...\ \to T_n \to T$.

\begin{definition}[A type checking algorithm]
  \label{rsm}
  \fbox{$(\Psi, \sigma) \longrightarrow (\Psi', \sigma')$}

  \begin{itemize}
  \item $(\{(L, \Gamma, (x | c), T), \Psi\}, \sigma) \longrightarrow_s (\{\sigma' \Psi, (\sigma'L', \sigma'\Gamma, \diamond, \diamond)\}, \sigma' \cdot \sigma)$

    if $(x|c) : \forall a_1 ... \forall a_k . T' \in \Gamma$ and $\mathsf{Scope}(L, \sigma')$. Here $\sigma', L'$ is defined by the followings.
    \begin{itemize}
    \item $T' \sim_{\sigma'} T$.

    \item $L' = L + [(a_i, n)\ |\ 0 \leq i \leq k, n = \mathrm{max}(L)+1]$. Note that $a_1,..., a_k$ are the fresh free variables in $T'$. 

    \end{itemize}

    \
    
  \item $(\{(L, \Gamma, \lambda x_1.... \lambda x_n . e\ e_1 ... \ e_k, \forall a_1. ...\forall a_m . T_1,..., T_n \to T), \Psi\}, \sigma) \longrightarrow_i$

    $(\{([L, (a_1, n'+1),..., (a_m, n'+m)], [\Gamma, x_1 : T_1,..., x_n : T_n], e, T), \Psi\}, \sigma)$,

    where $m > 0$, $k, n \geq 0 $, $n' = \mathrm{max}(L)$ and $a_1,..., a_m$ are the fresh eigenvariables in $T_1,..., T_n \to T$. 

        \
        
  \item  $(\{(L, \Gamma, (x | c)\ e_1 \ ... \ e_n, T), \Psi \}, \sigma) \longrightarrow_a$ 

    $(\{(\sigma' L', \sigma'\Gamma, e_1, \sigma' T_1'), ..., (\sigma' L', \sigma'\Gamma, e_l, \sigma' T_l'), $

    \quad \quad $(\sigma' L', \sigma'\Gamma, e_{l+1}, \sigma' b_{1}), ..., (\sigma' L', \sigma'\Gamma, e_{n}, \sigma' b_{n-l}), \sigma' \Psi\}, \sigma' \cdot \sigma)$,

    where $0 \leq l \leq n$, $n > 0$, $(x | c) : \forall a_1 ... \forall a_k. T_1',..., T_l' \to T' \in \Gamma$ and $\mathsf{Scope}(L, \sigma')$. Here $\sigma', L'$ is defined by the following.

    \begin{itemize}
    \item $T' \sim_{\sigma'} (b_{1} ,..., b_{n-l} \to T)$, where $b_{1},..., b_{n-l}$ are fresh free variables.
    \item $n'= \mathrm{max}(L)+1$, $L' = L + [(a_i, n') | 1 \leq i \leq k] + [(b_j, n') | 1\leq b_j \leq n-l]$. Note that $a_1,..., a_k$ are the fresh free variables in $T_1',..., T_l' \to T'$. 
    \end{itemize}

    \
    
  \item $(\{(L, \Gamma, (x | c)\ e_1 \ ... \ e_n, T), \Psi\}, \sigma) \longrightarrow_b$
    
    $(\{(\sigma' L', \sigma'\Gamma,e_1, \sigma' T_1'), ..., (\sigma' L', \sigma'\Gamma, e_n, \sigma' T_n'), \sigma' \Psi\}, \sigma' \cdot \sigma)$,

    where $0< n < l$, $(x | c) : \forall a_1 ... \forall a_k. T_1',..., T_l' \to T' \in \Gamma$
    and $\mathsf{Scope}(L, \sigma')$. Here $\sigma', L'$ is defined by the following.

    \begin{itemize}
    \item $(T_{n+1}',..., T_l' \to T') \sim_{\sigma'} T$
    \item $L' = L + [(a_i, n') \ | 1 \leq i \leq k, n'= \mathrm{max}(L)+1 ]$. Note that $a_1,..., a_k$ are the fresh free variables in $T_1',..., T_l' \to T'$. 
    \end{itemize}
  \end{itemize}

  \noindent  Note that $\sigma \Psi$ is defined as followings.
  
  $\sigma \{\} = \{\}$

  $\sigma \{(L, \Gamma, e, T), \Psi\} = \{(\sigma L, \sigma \Gamma, e, \sigma T), \sigma \Psi\}$, where $\mathsf{Scope}(L, \sigma)$. 

  $\sigma \{(L, \Gamma, \diamond, \diamond), \Psi\} = \{(\sigma L, \sigma \Gamma, \diamond, \diamond), \sigma \Psi\}$, where $\mathsf{Scope}(L, \sigma)$.

\end{definition}

The transition system is defined
over the structure of $e$ and $T$, hence it is terminating. To type
check $e$ with $T$ under the environment $\Gamma$, the initial
state will be $(\{([], \Gamma, e, T)\}, \mathrm{id})$.
We say the type checking is successful if the final state is of the form
$(\{(L_1, \Gamma_1, \diamond, \diamond),..., (L_n, \Gamma_n,\diamond,\diamond)\},\sigma)$, where $n > 0$.

There are two kinds of nondeterminism going on in the transition system in Definition \ref{rsm}. One is due to our decision on handling type abstraction, i.e. the transition
will be branching when
we check an application against a forall-quantified type. This means the rule $\longrightarrow_i$ is overlapped with rules $\longrightarrow_s$ when $n = k = 0$, and it is overlapped with $\longrightarrow_a$ and $\longrightarrow_b$ when $n = 0, k > 0$. Another kind of nondeterminism
is due to the appearance of the second-order type variables, so the rules $\longrightarrow_s, \longrightarrow_a, \longrightarrow_b$ could leads to multiple states, as the bidirectional matching can generate multiple valid unifiers.

Each of the transitions $\longrightarrow_s, \longrightarrow_a, \longrightarrow_b $
will generate a substitution $\sigma$, which will be checked by the predicate
$\mathsf{Scope}$ against the current scope value environment $L$. Then
 $L$ is extended to $L'$ with some new free variables, 
and this $L'$ will be updated to a new scope environment $\sigma L'$,
which will contain all the free variables that appear in the environment $\sigma \Gamma$.

The $\longrightarrow_s$ rule is for handling the variable and the constant case. The $\longrightarrow_i$ rule is solely responsible for the type abstraction of the outermost forall-quantified variables. When checking a lambda-abstraction against a forall-type, it is only natural to perform type abstraction. This is why $\longrightarrow_i$ rule is also removing the lambda-abstractions after the type abstraction.  


In general, a function of a type $T_1,..., T_n \to T$ does not always have $n$ input.
The rule $\longrightarrow_b$ accounts for the partial application, while The rule $\longrightarrow_a$
accounts for the \textit{over application}. For example, it seems $\mathsf{id} : \forall a . a \to a$
can only take one input, but we know $\mathsf{id}\ \mathsf{id}\ \mathsf{id}$ is typable with $\forall a . a \to a$, this is why we need the rule $\longrightarrow_a$ to provide additional free type variables for later instantiation (i.e. the free variables $b_1,..., b_{n-l}$ in the rule $\longrightarrow_a$).

\subsection{Soundness and examples}

We prove the type checking algorithm is sound. The proof gives rise to an algorithm
that produces an annotated program if the type checking is successful. 

\begin{theorem}[Soundness\footnote{The proof is at Appendix \ref{sound:type}}]
  \
  
\noindent If $(\{([] , \Gamma, e, T)\}, \mathrm{id}) \longrightarrow^* (\{(L_1, \Gamma_1, \diamond, \diamond),..., (L_n, \Gamma_n,\diamond,\diamond)\}, \sigma)$, where $\mathrm{FV}(\Gamma) = \emptyset$ and $\mathrm{FV}(T) = \emptyset$, then there exists a $p$ in $\mathbf{F}_\omega$ such that $ \Gamma \vdash \sigma p : T$ and $|\sigma p| = e$\footnote{Here $|p|$ means erasing all the type annotations in $p$.}. 
\end{theorem}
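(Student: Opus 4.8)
The plan is to strengthen the statement so that the induction can track the whole multi-goal state, since a single transition replaces the head goal by several sub-goals sitting next to the waiting context $\Psi$. Concretely I would prove: whenever $(\Psi,\sigma)\longrightarrow^{*}(\Psi',\tau)$ with $\Psi'$ consisting only of finished tuples $(L,\Gamma,\diamond,\diamond)$, then for every \emph{open} goal $(L,\Gamma,e,T)$ occurring in $\Psi$ there is an annotated $\mathbf{F}_\omega$ term $p$ with $|p|=e$ and $\tau\Gamma\vdash\tau p:\tau T$. The theorem is the instance $\Psi=\{([],\Gamma,e,T)\}$, $\sigma=\mathrm{id}$, with $\tau$ the final substitution (called $\sigma$ in the statement); since $\mathrm{FV}(\Gamma)=\mathrm{FV}(T)=\emptyset$ we have $\tau\Gamma=\Gamma$ and $\tau T=T$, giving $\Gamma\vdash\sigma p:T$. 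The crucial bookkeeping point is that the conclusion always refers to the \emph{single} final substitution $\tau$, so that when I apply the induction hypothesis to the shorter derivation $(\Psi_1,\sigma_1)\longrightarrow^{*}(\Psi',\tau)$ it returns terms witnessed by the same $\tau$, and the sub-goal terms can be plugged directly into the term for the processed goal.

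The induction is on the length of $\longrightarrow^{*}$. The base case (zero steps, $\Psi$ all finished) is vacuous. In the inductive step I case-split on the first rule. For $\longrightarrow_{s}$, from $(x|c):\forall a_1\ldots\forall a_k.T'\in\Gamma$ and $T'\sim_{\sigma'}T$, Theorem \ref{unif:sound} gives $\sigma'T'\equiv\sigma'T$, so the type application $(x|c)\,(\tau a_1)\cdots(\tau a_k)$ has type $\tau T'$ and, by the type-equivalence rule of Figure \ref{fig:systemfo}, type $\tau T$. For $\longrightarrow_{i}$ I take the induction-hypothesis term $q$ for the body and form $\lambda a_1\ldots\lambda a_m.\lambda x_1:T_1\ldots\lambda x_n:T_n.\,q$, using term abstraction and the $\forall$-introduction rule; this is the case whose side condition $a_i\notin\mathrm{FV}(\tau\Gamma)$ must be discharged. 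For $\longrightarrow_{a}$ and $\longrightarrow_{b}$ I instantiate $(x|c)$ at the $\tau$-images of $a_1,\ldots,a_k$, use $T'\sim_{\sigma'}(b_1,\ldots,b_{n-l}\to T)$ (respectively the matching of the partial-application case) together with type equivalence to rewrite the result type, and then apply the function spine to the sub-goal terms $p_1,\ldots,p_n$. In every case the erasure $|\tau p|=e$ is immediate, because all type abstractions and type applications vanish under erasure and the remaining term abstractions and applications reproduce the shape of $e$.

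The main obstacle is the two-fold substitution-and-scope bookkeeping that makes the single-$\tau$ formulation coherent. First, in the $\longrightarrow_{s},\longrightarrow_{a},\longrightarrow_{b}$ cases the objects in $\Psi_1$ are $\sigma'$-images of the objects in $\Psi$, so transferring the induction hypothesis back requires an \emph{absorption} property $\tau\cdot\sigma'=\tau$; I would establish this as an invariant of reachable states, namely that the accumulated substitution is idempotent and that each incremental unifier is absorbed by all later ones. This relies precisely on the matching conditions $\mathrm{dom}(\sigma')\#\mathrm{codom}(\sigma')$ and $\mathrm{agree}(\sigma')$ of Definition \ref{bi-match}. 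Second, and this is the genuinely delicate part, I must show that the eigenvariables $a_1,\ldots,a_m$ introduced by $\longrightarrow_{i}$ can legitimately become the binders $\lambda a_i$ of the reconstructed term, i.e.\ that $a_i\notin\mathrm{FV}(\tau\Gamma)$ holds at the abstraction point and that $a_i\notin\mathrm{dom}(\tau)$.

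This last condition is exactly what the scope machinery of Section \ref{tcalg} is designed to enforce, so I expect the bulk of the work to lie there. I would prove, as an invariant maintained along the transition sequence, that the scope values in $L$ totally order the introductions of variables and that $\mathsf{Scope}(L,\sigma')$ together with the updating $\sigma'L'$ forbids substituting a free variable by a type containing an eigenvariable of larger scope value. Translated into the $\mathbf{F}_\omega$ derivation, this says that no eigenvariable can appear in the context before its $\forall$-binder, which is precisely the freshness requirement of the $\longrightarrow_{i}$ case. The hardest single step will be checking that this scope invariant survives the updating of scope values, i.e.\ the replacement of a pair $(a,n)$ by pairs $(b,M^a)$ when a free variable $a$ is substituted by a type mentioning free variables of larger scope value, since it is there that the soundness of the whole scope discipline is concentrated.
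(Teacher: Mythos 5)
Your overall plan is the same as the paper's: generalize to multi-goal states, induct on the length of the transition sequence, reconstruct the term per rule (instantiation plus a typing-closed-under-substitution lemma for $\longrightarrow_s,\longrightarrow_a,\longrightarrow_b$; term and type abstraction for $\longrightarrow_i$), and use the scope machinery to discharge the $\forall$-introduction side condition. One bookkeeping difference: the paper's generalized statement never mentions the absolute final substitution. It writes the final substitution as $\sigma'\cdot\sigma$ and concludes $\sigma'\Gamma_i\vdash p_i:\sigma' T_i$ with the \emph{relative} substitution $\sigma'$ generated after the starting state, so that transferring the induction hypothesis across a step is definitional: the IH for the successor state (whose goals are $\sigma'$-images) yields typings under $\sigma''(\sigma'\Gamma)=(\sigma''\cdot\sigma')\Gamma$. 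This avoids the global absorption property $\tau\cdot\sigma'=\tau$ that your absolute-$\tau$ formulation forces you to prove as a separate reachability invariant; only local idempotency facts following from $\mathrm{dom}(\sigma')\#\mathrm{codom}(\sigma')$ are needed.

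There is, however, a genuine gap: your strengthened statement is false as stated, because it omits the coverage hypothesis that the paper's generalized theorem carries, namely $(\mathrm{FV}(\Gamma_i)\cup\mathrm{FV}(T_i))\subseteq\mathrm{fst}(L_i)$ for every goal, preserved along transitions by Lemma~\ref{record}. The predicate $\mathsf{Scope}(L,\sigma')$ constrains only variables recorded in $L$ (for $a\notin L$, $L(a)$ is ``arbitrarily large''), so for an untracked free variable the scope check is vacuous. Concretely, let $\Gamma=\{\mathsf{k2}:\forall x. x\to x\}$ and start from $(\{([],\Gamma,\mathsf{k2},\forall y. q\to y)\},\mathrm{id})$, where $q$ is free but not recorded in $L=[]$. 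Then $\longrightarrow_i$ introduces the eigenvariable $y_0$ with scope value $1$, and $\longrightarrow_s$ succeeds with unifier $\sigma'=[y_0/q,y_0/x]$, since $\mathsf{Scope}([(y_0,1)],\sigma')$ holds vacuously ($q,x\notin L$). The run reaches a finished state with $\tau=[y_0/q,y_0/x]$, yet there is no $p$ with $|p|=\mathsf{k2}$ and $\Gamma\vdash p:\tau(\forall y. q\to y)=\forall y. y_0\to y$: every candidate is some $\lambda y.\,\mathsf{k2}\;T$ of type $\forall y. T\to T$, never $\forall y. y_0\to y$. This is exactly the paper's scope-management example, made to slip through by starting from a state violating coverage. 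In your induction the failure surfaces in the $\longrightarrow_i$ case: the IH gives a term of type $y_0\to y_0$, but pushing $\tau$ under the new binder requires $\tau(\forall \underline{a}.\,\cdots)=\forall \underline{a}.\,\tau(\cdots)$, i.e.\ that the new eigenvariables do not occur in $\tau$'s action on the goal's free variables --- and your invariant ``$\mathsf{Scope}$ forbids substituting a free variable by a type containing a later eigenvariable'' is only true of variables tracked in $L$. The fix is to add the coverage condition to your induction statement and prove it is preserved by every rule and by the updating $\sigma L$, which is precisely the paper's Lemma~\ref{record}.
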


\begin{example}
  \label{church}
 Consider the following Church encoded numbers.
\begin{verbatim}
type Nat :: * = forall x . (x -> x) -> x -> x
zero :: Nat
zero = \ s z -> z
succ :: Nat -> Nat
succ n = \ s z -> s (n s z)
add :: Nat -> Nat -> Nat
add n m = n succ m
\end{verbatim}
To type check \texttt{add}, let $\Gamma = \mathsf{zero} : \mathsf{Nat}, \mathsf{succ} : \mathsf{Nat} \to \mathsf{Nat}$ and the initial state be $(\{([], \Gamma, \lambda n . \lambda m . n \ \mathsf{succ}\ m, \mathsf{Nat} \to \mathsf{Nat} \to \mathsf{Nat})\}, \mathrm{id})$. We have a successful and a failed transition in Figure \ref{fig:trans}. Note that $\mathsf{Nat}$
is an abbreviation of $\forall x . (x \to x) \to x \to x$. A branching occurs
at the state $([], [\Gamma, n : \mathsf{Nat}, m : \mathsf{Nat}], n \ \mathsf{succ}\ m, \mathsf{Nat})$, where we can apply either $\longrightarrow_a$ or $\longrightarrow_i$, the former will lead to
a successful transition, while the latter will fail because $\mathsf{Nat} \to \mathsf{Nat}$
cannot be unified with $((x_0 \to x_0) \to x_0 \to x_0) \to ((x_0 \to x_0) \to x_0 \to x_0)$. 


\begin{figure}
1. $(\{([], \Gamma, \lambda n . \lambda m . n \ \mathsf{succ}\ m, \mathsf{Nat} \to \mathsf{Nat} \to \mathsf{Nat})\}, \mathrm{id}) \longrightarrow_i $

 $(\{([], [\Gamma, n : \mathsf{Nat}, m : \mathsf{Nat}], n \ \mathsf{succ}\ m, \mathsf{Nat})\}, \mathrm{id}) = $

 $(\{([], [\Gamma, n : \forall x . (x \to x) \to x \to x, m : \mathsf{Nat}], n \ \mathsf{succ}\ m, \mathsf{Nat})\}, \mathrm{id}) \longrightarrow_a$

  $(\{([], [\Gamma, n : \mathsf{Nat}, m : \mathsf{Nat}], \mathsf{succ}, \mathsf{Nat} \to \mathsf{Nat}), ([], [\Gamma, n : \mathsf{Nat}, m : \mathsf{Nat}], m, \mathsf{Nat})\}, [\mathsf{Nat}/x])$

  $ \longrightarrow_s^* (\{([], [\Gamma, n : \mathsf{Nat}, m : \mathsf{Nat}], \diamond, \diamond), ([], [\Gamma, n : \mathsf{Nat}, m : \mathsf{Nat}], \diamond, \diamond)\}, [\mathsf{Nat}/x])$






  \

  2.  $(\{([], \Gamma, \lambda n . \lambda m . n \ \mathsf{succ}\ m, \mathsf{Nat} \to \mathsf{Nat} \to \mathsf{Nat})\}, \mathrm{id}) \longrightarrow_i $

 $(\{([], [\Gamma, n : \mathsf{Nat}, m : \mathsf{Nat}], n \ \mathsf{succ}\ m, \forall x . (x \to x) \to x \to x)\}, \mathrm{id}) \longrightarrow_i $

  $(\{([(x_0, 1)], [\Gamma, n : \mathsf{Nat}, m : \mathsf{Nat}], n \ \mathsf{succ}\ m, (x_0 \to x_0) \to x_0 \to x_0)\}, \mathrm{id}) \longrightarrow_a$
  
  $(\{([(x_0, 1)], [\Gamma, n : \mathsf{Nat}, m : \mathsf{Nat}], \mathsf{succ}, ((x_0 \to x_0) \to x_0 \to x_0) \to ((x_0 \to x_0) \to x_0 \to x_0)), ([(x_0, 1)], [\Gamma, n : \mathsf{Nat}, m : \mathsf{Nat}], m, (x_0 \to x_0) \to x_0 \to x_0)\}, \mathrm{id}) \not \longrightarrow$


  \caption{The type checking transition of Example \ref{church}}
\label{fig:trans}  
\end{figure}
\end{example}

\begin{example}
Let $\Gamma = \mathsf{k1} : \forall q . (\forall y . q \to y) \to \mathsf{Bot}, \mathsf{k2} : \forall x . x \to x$. To type check $\mathsf{k1}\ \mathsf{k2}$ with type $\mathsf{Bot}$, let the initial state be $(\{([], \Gamma, \mathsf{k1}\ \mathsf{k2}, \mathsf{Bot})\}, \mathrm{id})$. We will have the following two unsuccessful transitions:

1. $(\{([], \Gamma,  \ \mathsf{k1}\ \mathsf{k2}, \mathsf{Bot})\}, \mathrm{id}) \longrightarrow_a
  (\{([(q, 1)], \Gamma, \mathsf{k2}, \forall y. q \to y)\}, \mathrm{id}) \not \longrightarrow_s$

\

2. $(\{([], \Gamma,  \ \mathsf{k1}\ \mathsf{k2}, \mathsf{Bot})\}, \mathrm{id}) \longrightarrow_a
  (\{([(q, 1)], \Gamma, \mathsf{k2}, \forall y. q \to y)\}, \mathrm{id}) \longrightarrow_i $

$(\{([(q, 1), (y_0, 2)], \Gamma, \mathsf{k2}, q \to y_0)\}, \mathrm{id}) \not \longrightarrow_s$

\noindent In the first transition, the step $\longrightarrow_s$ can not be performed because
$x\to x$ is not unifiable with $\forall y . q \to y$. In the second transition, the step $\longrightarrow_s$ can not be performed because the unifier of $x \to x$ and $q \to y_0$ is $[y_0/q,y_0/x]$, but the predicate $\mathsf{Scope}([(q, 1), (y_0, 2)], [y_0/q,y_0/x])$ is false because $q$ refers to the eigenvariable $y_0$, which is introduced later than $q$. 
\end{example}

\subsection{Discussion}

The type checking algorithm cannot type check beta-redex, i.e. any programs of the form
$(\lambda x.e)\ e'$, because it cannot infer a type for $\lambda x.e$. So programmer will have to
restructure the program as $f_1 = \lambda x.e, f_2 = f_1 \ e'$ and supply type annotations
for $f_1, f_2$. We argue that this is not a serious problem because most
programs do not contain explicit beta-redex (See also the programs examples in the Appendix \ref{examples}).

Although the order of the tuples in $\Psi$ does not matter for the soundness proof, it does matter in practice. The transitions $\longrightarrow_a, \longrightarrow_b$ could generate multiple
new tuples, how are we going to decide in what order to check these tuples? We could
try all the possible combinations, but this is not very efficient. So in the prototype implementation
we use a measure to arrange the order of tuples generated by $\longrightarrow_a, \longrightarrow_b$. The measure is the number of implications in the goal $T$ in
a tuple $(L, \Gamma, e, T)$, the more implications it has, the higher priority we will
give to check this tuple (as it may provide more useful information that we can use later).
For example \cite{jones2007practical}, let $\Gamma = \mathsf{revapp} : \forall a. \forall b . a \to (a \to b) \to b, \mathsf{poly} : (\forall v. v \to v) \to \mathsf{Pair}\ \mathsf{Nat}\ \mathsf{Bool}$. Here $\mathsf{Pair}$ is a type constructor. Consider the following transition.

$(\{([], \Gamma, \mathsf{revapp}\ (\lambda x.x)\ \mathsf{poly}, \mathsf{Pair}\ \mathsf{Nat}\ \mathsf{Bool})\},\mathrm{id}) \longrightarrow_a $

$(\{([(a_0, 1)], \Gamma, (\lambda x.x), a_0), $

\quad $([(a_0, 1)], \Gamma, \mathsf{poly}, a_0 \to \mathsf{Pair}\ \mathsf{Nat}\ \mathsf{Bool})\},[\mathsf{Pair}\ \mathsf{Nat}\ \mathsf{Bool}/b])$

\noindent Here $a_0$ is a free variable that is introduced when we instantiate the type of $\mathsf{revapp}$. If we
try to type check the tuple $([(a_0, 0)], \Gamma, (\lambda x.x), a_0)$ first, we will stuck
because no rule apply to this tuple. But if we type check the tuple $([(a_0, 0)], \Gamma, \mathsf{poly}, a_0 \to \mathsf{Pair}\ \mathsf{Nat}\ \mathsf{Bool})$ first, we will obtain the new information, i.e. $a_0$ will be instantiated with $(\forall v. v\to v)$, as a result, we can type check the tuple $([], \Gamma, (\lambda x.x), \forall v. v\to v)$. This example fits the heuristic
that the type $a_0 \to \mathsf{Pair}\ \mathsf{Nat}\ \mathsf{Bool}$ gives more information than the type $a_0$
since it has more implications. 


\section{Extensions and implementation}
\label{extent}

In order to show the type checking algorithm works for a nontrivial
subclass of functional programs, we extend it to work with
let-bindings and pattern matching. First we extend the unannotated expression, $e\ ::= ... \ | \ \mathrm{let}\ x = e\ \mathrm{in}\ e' \ | \ \mathrm{let}\ x : T = e\ \mathrm{in}\ e'\ | \ \mathrm{case}\ e\ \{p_i \to e_i\}_{i\in N}$. Here $N$ stands for an index set and $p_i$ stands for the pattern $p\ ::=\ x \ | \ c \ | \ c \ p_1\ ... \ p_n$. The following are
the rules for checking let-bindings and pattern matching. The idea is that 
we can use fresh free variables as goals to enable the algorithm to perform a limited degree of inference. 

\begin{definition}[Extensions]
  \label{extensions}
  \
  
  \begin{itemize}

    
  \item $(\{(L, \Gamma, \mathrm{let}\ x = e\ \mathrm{in}\ e', T), \Psi\}, \sigma) \longrightarrow_{\mathrm{let}}$

    $(\{([L, (b, n)], [\Gamma, x : b], e, b), ([L, (b, n)], [\Gamma, x : b], e', T), \Psi\}, \sigma)$, where $b$ is a fresh free variable and $n = \mathrm{max}(L)+1$.

    \
    
      \item $(\{(L, \Gamma, \mathrm{let}\ x : T = e\ \mathrm{in}\ e', T'), \Psi\}, \sigma) \longrightarrow_{\mathrm{let'}}$

    $(\{(L, [\Gamma, x : T], e, T), (L, [\Gamma, x : T], e', T'), \Psi\}, \sigma)$. 

    \
    
  \item $(\{(L, \Gamma, \mathrm{case}\ e\ \{p_i \to e_i\}_{i\in N}, T), \Psi\}, \sigma) \longrightarrow_{\mathrm{case}}$

    $(\{([L, (b, n)], \Gamma, e, b), $

    \quad $\{([L, (b, n), \mathrm{SC}(\Phi_{p_i})], [\Gamma, \Phi_{p_i}], p_i, b),$

    \quad \quad $([L, (b, n), \mathrm{SC}(\Phi_{p_i})], [\Gamma, \Phi_{p_i}], e_i, T)\}_{i\in N}, \Psi\}, \sigma)$.

    Here $b$ is fresh free variable, $n = \mathrm{max}(L)+1$, $\Phi_{p_i} = [(x: a) \ | \ x \in \mathrm{FV}(p_i), \mathrm{freshFree}(a)]$ and $\mathrm{SC}(\Phi_{p_i}) = [(a, n)\ |\ a \in \mathrm{codom}(\Phi_{p_i})]$.  
  \end{itemize}
\end{definition}

The type checking rule $\longrightarrow_{\mathrm{let}}$ views the let-bind variable as an abbreviation, hence it does not support let-generalization.  
To support the let-generalization, we use the rule $\longrightarrow_{\mathrm{let'}}$, which requires the user to supply annotation for the let-bind variable. This view of let-bindings coincides with the one in \cite{vytiniotis2010let}. 

We have implemented the Definition \ref{rsm}, \ref{extensions} in a prototype type checker. Benefits from the soundness proof, the type checker can
output an annotated program. Thus we can use a separated proof checker to perform an additional check on the annotated program. 
We use the type checker to type check Stump's Church-encoded merge sort algorithm that uses impredicative Church-encoded list and braun tree \cite{stump2016efficiency}. To show the support of second-order types, we use the type checker to type check Bird and Paterson's program \cite{bird1999bruijn} without the dupplications of the generalized fold\footnote{See Appendix \ref{examples} for more details.}. 

\section{Conclusion}

Higher-rank types have been well-studied in the literature (e.g. \cite{odersky1996putting}, \cite{vytiniotis2006boxy}, \cite{jones2007practical}, \cite{dunfield2013complete}). 
There are also a lot of research on impredicative types (e.g. \cite{jones1997first}, \cite{vytiniotis2006boxy}, \cite{le2014mlf}). The study on second-order types is relatively
few, but it has been considered before (e.g. \cite{pfenning1988partial}, \cite{neubauer2002type}). The type checking algorithm in this paper differs from most existing ones in two aspects. (1) It exploits nondeterminism to handle type abstraction and second-order types while being terminating. (2) It is not an extension of Hindley-Milner algorithm and cannot perform general type inference.
Hence the algorithm represents
a particular approach to type check higher-rank, impredicative and second-order types. 

To summarize, we propose a type checking algorithm that supports 
 higher-rank, impredicative and second-order types. The algorithm relies on a
specialized version of second-order unification, which we studied. We also
prove the type checking algorithm is sound. The potential benefits includes
 a cleaner support for impredicative Church-encoded programs and an elimination of code duplications due to the
use of second-order and impredicative types. 
\bibliographystyle{plain}
  
\bibliography{higher-rank}
\newpage

\appendix

\section{Kinding rules}
\label{krule}
\begin{definition}[Erasure]
  \label{erase}
  \
  
 $|c| = c$ \quad  $|x| =x $ \quad $|\lambda x : T . p| = \lambda x . |p|$
  \quad $|p\ p'| = |p|\ |p'| $ \quad   $|\lambda a.p| = |p| $ \quad $|p\ T| = |p|$
\end{definition}

We define $\Delta ::=  \cdot \ | \ \Delta, a : K \ | \ \Delta, C : K$. 

\begin{definition}[Kinding rules]
  \label{kindsystem}
\fbox{$\Delta \vdash T : K$}
  
\

{
\begin{tabular}{lll}
\\
\infer{\Delta \vdash a|C : K}{(a|C : K) \in \Delta}    
&

\infer[]{\Delta \vdash T_1\ T_2 : K}{\Delta \vdash T_1 : K' \to K & \Delta \vdash T_2 : K'}
&

\infer[]{\Delta \vdash \lambda a . T : K' \to K}{\Delta, a : K' \vdash T : K}
\\ \\
\infer{\Delta \vdash T \to T': *}{\Delta \vdash T :  * & \Delta \vdash T' :  *}
&

\infer{\Delta \vdash \forall a. T : *}{\Delta \vdash T : *}

&

  \end{tabular}  
}
\end{definition}

\section{Dowek's bidrectional second-order matching}
\label{bisound}
Recall Dowek's bidirectional second-order matching.

\begin{definition}[Dowek's bidirectional second-order matching]
  \
  
  \begin{tabular}{l}

    \\
    $(\{a = a, E\}, V, \sigma) \longrightarrow (\{E\}, V, \sigma)$.

    \\
    \\
    
    $(\{a = T, E\}, V, \sigma) \longrightarrow (\{[T/a]E\}, V, [T/a] \cdot \sigma)$,
    \\
    if $a$ is first-order, $a \notin \mathrm{FV}(T)$ and $T \not \equiv a$.
    
    \\
    \\

    $(\{T = a, E\}, V, \sigma) \longrightarrow (\{a = T, E\}, V, \sigma)$. 

    \\
    \\

    $(\{\forall a . T = \forall b.T', E\}, V, \sigma) \longrightarrow (\{[a'/a]T = [a'/b]T', E\}, V\cup \{a'\}, \sigma)$,
    \\
    where $a'$ is a fresh eigenvariable.

    \\
    \\

    $(\{C\ T_1 \ ... \ T_n = C\ T_1'\ ... \ T_n', E\}, V, \sigma) \longrightarrow$
    $(\{T_1 = T_1',..., T_n = T_n', E\}, V, \sigma)$. 

    \\
    \\

    $(\{a\ T_1 \ ... \ T_n = C\ T_1'\ ... \ T_m', E\}, V, \sigma) \longrightarrow$
    $(\{T_i = C\ T_1'\ ... \ T_m', E \}, V, [\pi_n^i/a] \cdot \sigma)$.

    \\
    \\

    $(\{a\ T_1 \ ... \ T_n = C\ T_1'\ ... \ T_m', E\}, V, \sigma) \longrightarrow$
    \\
    $(\{(b_1 \ T_1 \ ... \ T_n)  = T_1', ..., (b_m \ T_1 \ ... \ T_n) = T_m', E, \}, V, [\theta_n^m(C)/a]\cdot \sigma)$,
    \\
    where $b_1, ..., b_m \in \mathrm{FV}(\theta_n^m(C))$.

    \\
    \\





    $(\{C\ T_1'\ ... \ T_m' = a\ T_1 \ ... \ T_n, E\}, V, \sigma) \longrightarrow (\{a\ T_1 \ ... \ T_n = C\ T_1'\ ... \ T_m', E\}, V, \sigma)$.
  \end{tabular}

  \
  
  If $(\{T = T'\}, \emptyset, \mathrm{id}) \longrightarrow^* (\emptyset, V, \sigma)$, $V \# \mathrm{EV}(\mathrm{codom}(\sigma)), \mathrm{dom}(\sigma)\# \mathrm{codom}(\sigma)$ and $\mathrm{agree}(\sigma)$,
  then we say the bidirectional matching is successful, otherwise it fails. 

\end{definition}

We use $\forall \underline{b}. T$ to denote $\forall b_1 .... \forall b_n . T$ for some $n \geq 0$. We call equations $\forall \underline{b} . C\ T_1\ ...\ T_n = \forall \underline{c} . C\ T_1'\ ...\ T_n'$, $\forall \underline{b} . b_i\ T_1\ ...\ T_n = \forall \underline{c} . C\ T_1'\ ...\ T_n'$ and $\forall \underline{b} . b_i\ T_1\ ...\ T_n = \forall \underline{c} . c_i\ T_1'\ ...\ T_n'$  \textit{rigid-rigid} equations
and $\forall \underline{b}. a \ T_1 \ ... T_n \ = \forall \underline{c}. C\ T_1'\ ...\ T_m'$(or $\forall \underline{c}. C\ T_1'\ ...\ T_m' = \forall \underline{b}. a \ T_1 \ ...\ T_n$, where $a \notin \{b_1,.., b_n\}$) \textit{rigid-flexible} equations. For the rigid-flexible equations such as $\forall \underline{b}. a \ T_1 \ ...\ T_n \ = \forall \underline{c}. C\ T_1'\ ...\ T_m'$, we call $C\ T_1'\ ...\ T_m'$ a \textit{rigid component}. 

\begin{theorem}
  $\longrightarrow$ is a terminating transition system. 
\end{theorem}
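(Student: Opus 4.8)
The plan is to prove termination by exhibiting a measure on configurations $(E, V, \sigma)$ that strictly decreases under $\longrightarrow$ with respect to some well-founded ordering. Since the transition system involves several heterogeneous rules — some that solve an equation outright, some that rewrite an equation into smaller ones, some that merely exchange sides, and some that instantiate a flexible variable — a single natural-number measure is unlikely to suffice, so I would use a lexicographically ordered tuple of measures.

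First I would analyze what each rule does to the equation set $E$. The rigid-rigid decomposition rule for $C\,T_1\cdots T_n = C\,T_1'\cdots T_n'$ replaces one equation by $n$ syntactically smaller equations, so it decreases the total term size of $E$. The $\longrightarrow_{\mathrm{forall}}$ rule strips a pair of quantifiers, decreasing the number of leading $\forall$'s. The elimination rule $a = T$ (with $a$ first-order, $a\notin\mathrm{FV}(T)$) removes the variable $a$ from the problem entirely — and crucially, the paper has already emphasized that in $\longrightarrow_{\mathrm{proj}}$ and $\longrightarrow_{\mathrm{imi}}$ the substitution is \emph{not} applied to $E$, which is exactly what prevents the classical blow-up and is the key to termination. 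The projection and imitation rules replace a flexible equation $a\,T_1\cdots T_n = C\,T_1'\cdots T_m'$ either by one of its arguments paired with the rigid side ($\longrightarrow_{\mathrm{proj}}$) or by $m$ equations of the form $b_j\,T_1\cdots T_n = T_j'$ ($\longrightarrow_{\mathrm{imi}}$); in both cases the rigid right-hand side is decomposed and the flexible head $a$ is consumed. The exchange rule and the trivial $a=a$ and $T=a$ orientation rules do not decrease an obvious size, so they must be controlled by a separate, earlier component of the lexicographic measure.

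Concretely, I would take the measure to be a tuple such as $\langle \mu_1(E), \mu_2(E), \mu_3(E)\rangle$ ordered lexicographically, where $\mu_1$ counts the number of distinct flexible (non-first-order) variable occurrences that still head a rigid-flexible equation — this strictly drops under $\longrightarrow_{\mathrm{proj}}$ and $\longrightarrow_{\mathrm{imi}}$ because the head $a$ is removed and the freshly introduced $b_j$ occur only under the unchanged arguments $T_1,\dots,T_n$, not as new rigid-flexible heads — while $\mu_2$ measures the total structural size of the rigid components (the $C\,T_1'\cdots T_m'$ right-hand sides), which the rigid-rigid and imitation decompositions shrink, and $\mu_3$ handles orientation: I would assign the oriented, solved shape $a = T$ a smaller $\mu_3$ than the unoriented $T = a$ and than the exchanged form, so that the $T=a$ rule and $\longrightarrow_{\mathrm{exchange}}$ each strictly decrease $\mu_3$ without increasing $\mu_1,\mu_2$. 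I would then verify rule by rule that each transition decreases this tuple in the lexicographic order, being especially careful that the orientation and exchange rules cannot be applied twice in a row to the same equation to create a loop.

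The main obstacle I expect is the imitation rule $\longrightarrow_{\mathrm{imi}}$ together with the exchange rule: imitation introduces genuinely new flexible variables $b_1,\dots,b_m$ applied to the old argument list, and I must argue that these new flexible equations $b_j\,T_1\cdots T_n = T_j'$ are strictly simpler in the chosen ordering than the original — this relies precisely on the fact, highlighted in the paper, that the substitution is never propagated into $E$, so the arguments $T_1,\dots,T_n$ are not themselves enlarged, and on the right-hand sides $T_j'$ being proper subterms of the original rigid component. Pinning down a $\mu_1$ that genuinely does not count the fresh $b_j$ as new rigid-flexible heads (since their right-hand sides $T_j'$ may themselves be rigid or flexible, potentially re-triggering projection/imitation) is the delicate point; I would handle it by measuring the combined size of all \emph{rigid} right-hand sides rather than the number of flexible heads alone, and demonstrate that this combined size strictly decreases under both $\longrightarrow_{\mathrm{proj}}$ and $\longrightarrow_{\mathrm{imi}}$, with the exchange/orientation rules relegated to a strictly lower-priority component that cannot reset the higher ones.
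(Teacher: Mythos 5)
Your overall strategy (a lexicographic tuple with rigid-component size as a central component and orientation/exchange relegated to the lowest priority) is in the same family as the paper's proof, which uses the lexicographic measure $(n_{\mathrm{fvar}}, n_{\mathrm{rcomp}}, n_{\mathrm{svar}}, n_{\mathrm{rr}}, n_\forall, n_{\mathrm{eq}})$. But your tuple is missing two components that are essential, and without them the rule-by-rule verification you plan would fail. First, you never handle the first-order elimination rule $(\{a = T, E\}, V, \sigma) \longrightarrow (\{[T/a]E\}, V, [T/a]\cdot\sigma)$. Your remark that ``the substitution is not applied to $E$'' is true only of $\longrightarrow_{\mathrm{proj}}$ and $\longrightarrow_{\mathrm{imi}}$; this rule \emph{does} substitute $T$ throughout $E$, and that can enlarge every quantity in your tuple (any rigid component containing $a$ grows, and equations of the form $b\ S_1 \cdots S_k = a$ can turn into new rigid-flexible equations). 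The only quantity that strictly decreases here is the number of distinct first-order variables, since $a \notin \mathrm{FV}(T)$ and the variables of $T$ were already present in the equation $a = T$; this is exactly why the paper puts $n_{\mathrm{fvar}}$ first, and your tuple has no counterpart. Second, your claim that the combined size of rigid sides ``strictly decreases under both $\longrightarrow_{\mathrm{proj}}$ and $\longrightarrow_{\mathrm{imi}}$'' is false for projection: when the projected argument $T_i$ is itself headed by a free second-order variable $b$, the new equation $T_i = C\ T_1' \cdots T_m'$ is again rigid-flexible with the very same rigid component, so the measure is unchanged (and your earlier $\mu_1$ fares no better, since the head $a$ is merely traded for the head $b$). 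The paper covers this case with the separate component $n_{\mathrm{svar}}$, the number of occurrences of second-order variables, which does strictly drop because the head occurrence of $a$ and all occurrences inside the discarded arguments $T_j$, $j \neq i$, disappear; nothing in your tuple plays this role. Relatedly, your justification that the fresh $b_j$ of imitation ``occur only under the unchanged arguments, not as new rigid-flexible heads'' misreads the rule: the new equations are $b_j\ T_1 \cdots T_n = T_j'$, so the $b_j$ are precisely the new heads, and the number of flexible heads can increase from $1$ to $m$.

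Two smaller gaps. Your three components give no quantity that decreases under $\longrightarrow_{\mathrm{forall}}$: you mention quantifier stripping in prose, but neither $\mu_1$, $\mu_2$ nor $\mu_3$ changes when a pair of quantifiers is removed, whereas the paper includes the component $n_\forall$ for exactly this rule. Finally, defining $\mu_2$ by rigid \emph{right-hand} sides is incompatible with pushing $\longrightarrow_{\mathrm{exchange}}$ to the lowest priority: exchange moves a rigid term from the left of an equation to the right, so an orientation-dependent $\mu_2$ would strictly increase under it, destroying the lexicographic descent. The measure must be orientation-invariant, which is why the paper defines the rigid component of a rigid-flexible equation symmetrically, for both orientations, and reserves its last component $n_{\mathrm{eq}}$ for the two reorientation rules alone.
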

\begin{proof}
  Let $n_{\mathrm{fvar}}$ denote the number of \textit{distinct} first-order variables, $n_{\mathrm{rr}}$
  the total size of rigid-rigid equations, $n_{\mathrm{rcomp}}$
  the sum of the sizes of the rigid components, $n_{\mathrm{svar}}$ the number of occurrences of all
  the second-order variables, $n_\forall$ denotes the number of equations of the form $\forall a . T = \forall b . T'$, $n_{\mathrm{eq}}$ the number of equations of the form $T = a$ or
  $C \ T_1 \ ... \ T_n = a\ T_1'\ ...\ T_m' $.
  Consider the lexicographic ordering on $(n_{\mathrm{fvar}}, n_{\mathrm{rcomp}}, n_{\mathrm{svar}}, n_{\mathrm{rr}}, n_\forall, n_{\mathrm{eq}})$. We will show this ordering is decreasing strictly along the transition. 

  \begin{itemize}
  \item Case: $(\{a = a, E\}, V, \sigma) \longrightarrow (\{E\}, V, \sigma)$.

    \
    
    In this case $n_{\mathrm{fvar}}, n_{\mathrm{svar}}$ is either not changed or become strictly
    smaller,
    $n_{\mathrm{rcomp}}, n_{\mathrm{rr}}, n_\forall$ does not change, $n_{\mathrm{eq}}$ becomes strictly smaller.

  \item Case: $(\{a = T, E\}, V, \sigma) \longrightarrow (\{[T/a]E\}, V, [T/a] \cdot \sigma)$, if $a$ is first-order, $a \notin \mathrm{FV}(T)$
    and $T \not \equiv a$.

    \

    In this case $n_{\mathrm{fvar}}$ becomes strictly smaller.

   \item Case: $(\{T = a, E\}, V, \sigma) \longrightarrow (\{a = T, E\}, V, \sigma)$. 

     \

     In this case only $n_{\mathrm{eq}}$ becomes strictly smaller.

   \item Case:

     $(\{C\ T_1'\ ... \ T_m' = a\ T_1 \ ... \ T_n, E\}, V, \sigma) \longrightarrow (\{a\ T_1 \ ... \ T_n = C\ T_1'\ ... \ T_m', E\}, V, \sigma)$.

      \

      In this case only $n_{\mathrm{eq}}$ becomes strictly smaller.

    \item Case. $(\{\forall a . T = \forall b.T', E\}, V, \sigma) \longrightarrow (\{[a'/a]T = [a'/b]T', E\}, V\cup \{a'\}, \sigma)$,
    where $a'$ is a fresh eigenvariable.

    \
    
    In this case $n_{\mathrm{fvar}}, n_{\mathrm{rcomp}}, n_{\mathrm{svar}}, n_{\mathrm{rr}}$ do not change, $n_\forall$ becomes strictly smaller.

    \item Case. 

    $(\{C\ T_1 \ ... \ T_n = C\ T_1'\ ... \ T_n', E\}, V, \sigma) \longrightarrow$
    $(\{T_1 = T_1',..., T_n = T_n', E\}, V, \sigma)$. 

      \

      In this case $n_{\mathrm{fvar}}, n_{\mathrm{rcomp}}, n_{\mathrm{svar}}$ do not change, $n_{\mathrm{rr}}$ becomes strictly smaller.

    \item Case. $(\{a\ T_1 \ ... \ T_n = C\ T_1'\ ... \ T_m', E\}, V, \sigma) \longrightarrow$
    $(\{T_i = C\ T_1'\ ... \ T_m', E \}, V, [\pi_n^i/a] \cdot \sigma)$.

      \

      In this case $n_{\mathrm{fvar}}$ does not change. If $T_i \equiv b \ T_1''\ ... \ T_l''$, where $b$ is a free variable,
      then $n_{\mathrm{rcomp}}$ does not change and $n_{\mathrm{svar}}$ becomes strictly smaller.
      Otherwise $n_{\mathrm{rcomp}}$ becomes strictly smaller.

  \item     $(\{a\ T_1 \ ... \ T_n = C\ T_1'\ ... \ T_m', E\}, V, \sigma) \longrightarrow$
    $(\{(b_1 \ T_1 \ ... \ T_n)  = T_1', ..., (b_m \ T_1 \ ... \ T_n) = T_m', E, \}, V, [\theta_n^m(C)/a]\cdot \sigma)$,
    where $b_1, ..., b_m \in \mathrm{FV}(\theta_n^m(C))$.

    \

    In this case $n_{\mathrm{fvar}}$ does not change. 
      But $n_{\mathrm{rcomp}}$ becomes strictly smaller.

  \end{itemize}
\end{proof}

\begin{theorem}[Soundness]
  \
  
\noindent If $(\{T_1 = T_1', ..., T_n = T_n' \}, V_1, \mathrm{\sigma_1}) \longrightarrow^* (\emptyset, V_2, \sigma_2 \cdot \sigma_1)$, $V_2 \# \mathrm{EV}(\mathrm{codom}(\sigma_2))$, $\mathrm{dom}(\sigma_2)\# \mathrm{codom}(\sigma_2)$ and $\mathrm{agree}(\sigma_2)$,
 then $\sigma_2 T_1 \equiv \sigma_2 T_1', ..., \sigma_2 T_n \equiv \sigma_2 T_n'$. 
\end{theorem}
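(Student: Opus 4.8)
The plan is to prove the soundness theorem by induction on the length of the transition sequence $(\{T_1 = T_1', \ldots, T_n = T_n'\}, V_1, \sigma_1) \longrightarrow^* (\emptyset, V_2, \sigma_2 \cdot \sigma_1)$, performing a case analysis on the first transition rule applied. The statement I want to establish is that $\sigma_2 T_i \equiv \sigma_2 T_i'$ for each $i$, given the side conditions $V_2 \mathbin{\#} \mathrm{EV}(\mathrm{codom}(\sigma_2))$, $\mathrm{dom}(\sigma_2) \mathbin{\#} \mathrm{codom}(\sigma_2)$, and $\mathrm{agree}(\sigma_2)$. The base case is vacuous: if the equation set is already empty there is nothing to prove. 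For the inductive step I would isolate the first transition, apply the induction hypothesis to the remaining (shorter) derivation, and then show that the equation consumed by the first step is also satisfied by the final substitution.

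The key subtlety is that the substitution accumulates in the \emph{wrong} direction relative to the equations: when the first rule instantiates a variable, say via $(\{a = T, E\}, V, \sigma) \longrightarrow (\{[T/a]E\}, V, [T/a]\cdot\sigma)$, the tail of the derivation operates on $[T/a]E$ rather than $E$, and the final substitution $\sigma_2$ already contains the effect of $[T/a]$. So for each rule I would first invoke the induction hypothesis to learn that $\sigma_2$ unifies the \emph{transformed} equations, and then argue that this forces $\sigma_2$ to unify the original equation. For the first-order elimination rule this amounts to checking that $\sigma_2 a \equiv \sigma_2 T$, which should follow because $[T/a]$ is a prefix of $\sigma_2$ together with the $\mathrm{agree}(\sigma_2)$ and idempotence-style conditions $\mathrm{dom}(\sigma_2)\mathbin{\#}\mathrm{codom}(\sigma_2)$. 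For the decomposition rules ($C\,\bar T = C\,\bar T'$ splitting into componentwise equations, and the $\longrightarrow_{\mathrm{forall}}$ rule), soundness of the original equation follows by congruence of type equivalence from soundness of the subequations, with the $\forall$ case needing that the fresh eigenvariable $a'$ is apart from $\mathrm{codom}(\sigma_2)$ — which is exactly what $V_2 \mathbin{\#} \mathrm{EV}(\mathrm{codom}(\sigma_2))$ guarantees, allowing me to turn $\sigma_2[a'/a]T \equiv \sigma_2[a'/b]T'$ back into $\sigma_2(\forall a.T) \equiv \sigma_2(\forall b.T')$.

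The genuinely delicate cases are the projection and imitation rules, precisely because (as the excerpt stresses) they do \emph{not} substitute into $E$. For $\longrightarrow_{\mathrm{proj}}$, where $a\,T_1\cdots T_n = C\,\bar T'$ becomes $T_i = C\,\bar T'$ with $[\pi_n^i/a]$ added, I would compute $\sigma_2(a\,T_1\cdots T_n) = (\sigma_2 a)\,(\sigma_2 T_1)\cdots(\sigma_2 T_n)$; since $[\pi_n^i/a]$ is part of $\sigma_2$ and $\pi_n^i = \lambda a_1\cdots a_n.\,a_i$, the type-equivalence rule $(\lambda a.T)\,T' = [T'/a]T$ reduces this to $\sigma_2 T_i$, which by the induction hypothesis equals $\sigma_2(C\,\bar T')$. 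The imitation case is analogous but requires unfolding $\theta_n^m(C) = \lambda a_1\cdots a_n.\,C\,(b_1\,\bar a)\cdots(b_m\,\bar a)$ and beta-reducing to match $C\,(\sigma_2(b_1\,\bar T))\cdots$ against $C\,(\sigma_2 T_1')\cdots$ componentwise. The exchange rule is immediate by symmetry of $\equiv$. \textbf{The main obstacle} I anticipate is making the bookkeeping on the accumulated substitution rigorous: I must be careful that the $\mathrm{agree}$, $\mathrm{dom}\mathbin{\#}\mathrm{codom}$, and $V_2\mathbin{\#}\mathrm{EV}(\mathrm{codom})$ conditions — which are only assumed about the \emph{final} $\sigma_2$ — can be correctly invoked at each intermediate step, since a priori these invariants are not preserved prefix-by-prefix. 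I would handle this by phrasing the induction so that the hypothesis quantifies over all derivations whose terminal substitution satisfies the three side conditions, and then verifying that the side conditions on $\sigma_2$ restrict appropriately to the relevant subequations (in particular that the freshly introduced variables $b_j$, $a'$, and the projection/imitation witnesses behave as required).
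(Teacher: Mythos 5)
Your proposal matches the paper's proof essentially step for step: induction on the length of the derivation with case analysis on the first transition, congruence for the decomposition cases, the apartness condition $V_2 \,\#\, \mathrm{EV}(\mathrm{codom}(\sigma_2))$ to re-generalize in the $\forall$ case, and beta-reduction of $\pi_n^i$ / $\theta_n^m(C)$ together with the $\mathrm{dom}\,\#\,\mathrm{codom}$ commutation to recover the original flexible equation in the projection/imitation cases. The bookkeeping issue you flag at the end (transferring the three side conditions from the final $\sigma_2$ to the substitution accumulated by the tail of the derivation) is exactly the point the paper handles by asserting that the conditions restrict to the suffix substitution, which holds under the paper's list-of-bindings reading of composition, so your plan goes through.
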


\begin{proof}
  By induction on the length of $(\{T_1 = T_1', ..., T_n = T_n' \}, V_1, \mathrm{\sigma_1}) \longrightarrow^* (\emptyset, V_2, \sigma_2 \cdot \sigma_1)$.

  \begin{itemize}
  \item Base case. $(\{a = a \}, V, \sigma) \longrightarrow (\emptyset, V, \sigma)$.

    \
    
    This case is trivial.

  \item Base case. $(\{a = T \}, V, \sigma) \longrightarrow (\emptyset, V, [T/a] \cdot \sigma)$, if $a$ is first-order, $a \notin \mathrm{FV}(T)$
     and $T \not \equiv a$.

     \

     This case is straightforward.

   \item Step case: $(\{T = a, E\}, V, \sigma) \longrightarrow (\{a = T, E\}, V, \sigma) \longrightarrow^* (\emptyset, V', \sigma' \cdot \sigma)$, where $V' \# \mathrm{EV}(\mathrm{codom}(\sigma'))$, $\mathrm{codom}(\sigma')\# \mathrm{dom}(\sigma')$ and $\mathrm{agree}(\sigma')$.

     \

     By IH, we have $\sigma' a \equiv \sigma' T, \sigma E$ holds. Thus $\sigma' T \equiv \sigma' a, \sigma E$ holds. 

   \item Step case: $(\{C\ T_1'\ ... \ T_m' = a\ T_1 \ ... \ T_n, E\}, V, \sigma) \longrightarrow (\{a\ T_1 \ ... \ T_n = C\ T_1'\ ... \ T_m', E\}, V, \sigma) \longrightarrow^* (\emptyset, V', \sigma' \cdot \sigma)$, where $V' \# \mathrm{EV}(\mathrm{codom}(\sigma'))$,

     \noindent $\mathrm{codom}(\sigma')\# \mathrm{dom}(\sigma')$ and $\mathrm{agree}(\sigma')$.

      \

   This case is by straightforward induction. 

    \item Step case. $(\{\forall a . T = \forall b.T', E\}, V, \sigma) \longrightarrow (\{[a'/a]T = [a'/b]T', E\}, V\cup \{a'\}, \sigma)$ $\longrightarrow^* (\emptyset, V', \sigma' \cdot \sigma)$, where where $a'$ is a fresh eigenvariable, $V' \# \mathrm{EV}(\mathrm{codom}(\sigma'))$,  $\mathrm{codom}(\sigma')\# \mathrm{dom}(\sigma')$ and $\mathrm{agree}(\sigma')$.

    \
    
    By IH, we know that $\sigma'[a'/a]T \equiv \sigma'[a'/b]T', \sigma' E$ holds. Since $a' \in V'$ and $V' \# \mathrm{EV}(\mathrm{codom}(\sigma'))$, we have $\forall a' . \sigma'[a'/a]T \equiv \forall a' . \sigma'[a'/b]T'$. Thus $\sigma' (\forall a . T) \equiv \sigma' (\forall b . T')$.

    \item Step case. 

    $(\{C\ T_1 \ ... \ T_n = C\ T_1'\ ... \ T_n', E\}, V, \sigma) \longrightarrow$
    $(\{T_1 = T_1',..., T_n = T_n', E\}, V, \sigma)\longrightarrow^* (\emptyset, V', \sigma' \cdot \sigma)$, where $V' \# \mathrm{EV}(\mathrm{codom}(\sigma'))$, $\mathrm{codom}(\sigma')\# \mathrm{dom}(\sigma')$ and $\mathrm{agree}(\sigma')$.

      \

      This case is by straightforward induction. 

    \item Step case. $(\{a\ T_1 \ ... \ T_n = C\ T_1'\ ... \ T_m', E\}, V, \sigma) \longrightarrow$
      $(\{T_i = C\ T_1'\ ... \ T_m', E \}, V, [\pi_n^i/a] \cdot \sigma)\longrightarrow^* (\emptyset, V', \sigma'[\pi_n^i/a] \cdot \sigma)$, where $V' \# \mathrm{EV}(\mathrm{codom}(\sigma'[\pi_n^i/a]))$,

      $\mathrm{codom}(\sigma'[\pi_n^i/a])\# \mathrm{dom}(\sigma'[\pi_n^i/a])$ and $\mathrm{agree}(\sigma'[\pi_n^i/a])$.

      \

      We know that $V' \# \mathrm{EV}(\mathrm{codom}(\sigma'))$, $\mathrm{codom}(\sigma')\# \mathrm{dom}(\sigma')$ and $\mathrm{agree}(\sigma')$.
      By IH, we have $\sigma' T_i \equiv \sigma' (C\ T_1'\ ... \ T_m'), \sigma' E$. We
      need to show $\sigma'[\pi_n^i/a](a\ T_1 \ ... \ T_n) \equiv \sigma' [\pi_n^i/a] (C\ T_1'\ ... \ T_m'), \sigma'[\pi_n^i/a] E$, which is the case because $\sigma' [\pi_n^i/a] = [\pi_n^i/a]\sigma'$ and $\mathrm{codom}(\sigma'[\pi_n^i/a])\# \mathrm{dom}(\sigma'[\pi_n^i/a])$.

  \item Step case. $(\{a\ T_1 \ ... \ T_n = C\ T_1'\ ... \ T_m', E\}, V, \sigma) \longrightarrow$
    $(\{(b_1 \ T_1 \ ... \ T_n)  = T_1', ..., (b_m \ T_1 \ ... \ T_n) = T_m', E, \}, V, [\theta_n^m(C)/a]\cdot \sigma)\longrightarrow^* (\emptyset, V', \sigma'[\theta_n^m(C)/a] \cdot \sigma)$, where $V' \# \mathrm{EV}(\mathrm{codom}(\sigma'[\theta_n^m(C)/a]))$,
      $\mathrm{codom}(\sigma'[\theta_n^m(C)/a])\# \mathrm{dom}(\sigma'[\theta_n^m(C)/a])$, $\mathrm{agree}(\sigma'[\theta_n^m(C)/a])$ and  $b_1, ..., b_m \in \mathrm{FV}(\theta_n^m(C))$.

    \

   We know that $V' \# \mathrm{EV}(\mathrm{codom}(\sigma'))$,
   $\mathrm{codom}(\sigma')\# \mathrm{dom}(\sigma')$, $\mathrm{agree}(\sigma')$.
   By IH, we have $\sigma' (b_1 \ T_1 \ ... \ T_n)  \equiv \sigma' T_1', ..., \sigma' (b_m \ T_1 \ ... \ T_n) \equiv \sigma' T_m', \sigma' E$. This imples that $\sigma'[\theta_n^m(C)/a] (b_1 \ T_1 \ ... \ T_n)  \equiv \sigma'[\theta_n^m(C)/a] T_1', ..., \sigma'[\theta_n^m(C)/a] (b_m \ T_1 \ ... \ T_n) \equiv \sigma'[\theta_n^m(C)/a] T_m', \sigma'[\theta_n^m(C)/a] E$ because $\mathrm{codom}(\sigma'[\theta_n^m(C)/a])\# \mathrm{dom}(\sigma'[\theta_n^m(C)/a])$. So $\sigma'[\theta_n^m(C)/a]( a\ T_1 \ ... \ T_n ) \equiv \sigma'[\theta_n^m(C)/a](C\ T_1'\ ... \ T_m')$.

   \item The rest of the cases are straightforward.
  \end{itemize}
\end{proof}

\section{Soundness of the type checking algorithm}
\label{sound:type}
\begin{lemma}[Typing closed under substitution]
  \label{type:subst}
  \
  
  \noindent If $\Gamma \vdash e : T$, 
  then $\sigma \Gamma \vdash \sigma e : \sigma T$.
\end{lemma}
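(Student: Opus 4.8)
The judgment $\Gamma \vdash e : T$ is derived by the rules of System $\mathbf{F}_\omega$ in Figure~\ref{fig:systemfo}, and $\sigma$ is a type substitution acting on the type annotations occurring inside $e$, on the types in $\Gamma$, and on $T$. The plan is a structural induction on the derivation of $\Gamma \vdash e : T$, pushing $\sigma$ through each rule. For this to go through cleanly I would first record two routine substitution facts: that type substitution commutes with itself in the sense that $\sigma([T'/a]T) \equiv [\sigma T'/a](\sigma T)$ whenever $a$ is fresh for $\sigma$ (i.e. $a \notin \mathrm{dom}(\sigma)$ and $a \notin \mathrm{FV}(\mathrm{codom}(\sigma))$), and that the type equivalence $=$ of Figure~\ref{fig:systemfo} is preserved by substitution, so $T = T'$ implies $\sigma T = \sigma T'$. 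Both follow by induction on the structure of types using the defining equation $(\lambda a.T)\,T' = [T'/a]T$.

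\textbf{Easy cases.} The variable/constant rule is immediate: if $(x\mid c):T \in \Gamma$ then $(x\mid c):\sigma T \in \sigma\Gamma$, giving $\sigma\Gamma \vdash (x\mid c):\sigma T$. For application, $\sigma$ distributes as $\sigma(p_1\,p_2) = \sigma p_1\,\sigma p_2$ and $\sigma(T'\to T) = \sigma T'\to \sigma T$, so the induction hypotheses on the two premises recombine under the application rule. For term abstraction, $\sigma(\lambda x{:}T'.\,p) = \lambda x{:}\sigma T'.\,\sigma p$ and $\sigma(\Gamma, x{:}T') = \sigma\Gamma, x{:}\sigma T'$, so the hypothesis on the premise rebuilds the conclusion directly. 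The conversion rule reduces, via the preservation of $=$ under $\sigma$ noted above, to the induction hypothesis on its premise.

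\textbf{The crux: the binder rules.} The interesting cases are type abstraction ($\Gamma \vdash \lambda a.\,p : \forall a.T$ from $\Gamma \vdash p : T$ with $a \notin \mathrm{FV}(\Gamma)$) and type application ($\Gamma \vdash p\,T' : [T'/a]T$ from $\Gamma \vdash p : \forall a.T$). In both I would first $\alpha$-rename the bound variable $a$ so that it is fresh for $\sigma$; this is where capture-avoidance matters, and it is the step I expect to carry the weight of the proof. With $a$ fresh, $\sigma(\forall a.T) \equiv \forall a.\,\sigma T$ and $\sigma(\lambda a.p) \equiv \lambda a.\,\sigma p$, and moreover $a \notin \mathrm{FV}(\sigma\Gamma)$ since $a \notin \mathrm{FV}(\Gamma)$ and $a$ does not occur in $\mathrm{codom}(\sigma)$. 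For the abstraction case the induction hypothesis gives $\sigma\Gamma \vdash \sigma p : \sigma T$, and the side condition $a \notin \mathrm{FV}(\sigma\Gamma)$ lets the type-abstraction rule re-fire. For the application case, the induction hypothesis gives $\sigma\Gamma \vdash \sigma p : \forall a.\,\sigma T$, and applying the type-application rule at $\sigma T'$ yields $\sigma\Gamma \vdash \sigma p\,\sigma T' : [\sigma T'/a](\sigma T)$; the commutation lemma identifies this goal type with $\sigma([T'/a]T)$, which is exactly $\sigma(p\,T')$'s required type. The only genuine obstacle is thus bookkeeping the freshness conditions so that the commutation lemma is applicable; once those are in place every case is mechanical.
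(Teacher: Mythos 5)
Your proposal is correct and follows exactly the route the paper takes: the paper's entire proof is the single line ``By induction on the derivation of $\Gamma \vdash e : T$,'' and your case analysis --- including the auxiliary commutation fact $\sigma([T'/a]T) \equiv [\sigma T'/a](\sigma T)$ and the capture-avoiding treatment of the two binder rules --- is precisely the standard elaboration of that induction. Nothing in your argument diverges from or goes beyond what the paper implicitly relies on; you have simply made the bookkeeping explicit.
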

\begin{proof}
  By induction on the derivation of $\Gamma \vdash e : T$.
\end{proof}

We write $\mathrm{min}(L)$ to mean the minimal scope value in $L$, if $L$ is empty,
then we pick an arbitrary large number for $\mathrm{min}(L)$. We write $\mathrm{fst}(L)$
to mean all the variables in $L$. 

\begin{lemma}[Scope check composition]
  \label{compose}
  Suppose $\mathrm{FV}(\mathrm{codom}(\sigma_2)) \cap \mathrm{dom}(\sigma_1) = \emptyset$, $\mathrm{dom}(\sigma_1) \# \mathrm{codom}(\sigma_1)$ and $\mathrm{dom}(\sigma_2) \# \mathrm{codom}(\sigma_2)$. Moreover, let $L, L'$ be scope environments, where $\mathrm{fst}(L) \# \mathrm{fst}(L')$
  and all the variables in $L'$ are free variables. 

\noindent If $\mathsf{Scope}(L, \sigma_1)$ and $\mathsf{Scope}(\sigma_1(L + L'), \sigma_2)$, then $\mathsf{Scope}(L, \sigma_2 \cdot \sigma_1)$. 
\end{lemma}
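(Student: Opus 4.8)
The plan is to unfold the definition of $\mathsf{Scope}$ and trace every eigenvariable that can occur in $(\sigma_2 \cdot \sigma_1)a$ back to a scope value recorded in the original $L$. Fix $a \in \mathrm{dom}(\sigma_2 \cdot \sigma_1)$ with $(a,n) \in L$ (if $a \notin L$ the obligation is vacuous) and let $b \in \mathrm{EV}((\sigma_2 \cdot \sigma_1)a)$; the goal is to produce $(b,n') \in L$ with $n' < n$. Using $\mathrm{FV}(\mathrm{codom}(\sigma_2)) \cap \mathrm{dom}(\sigma_1) = \emptyset$ together with the two idempotence hypotheses, I would first record that $(\sigma_2\cdot\sigma_1)a = \sigma_2(\sigma_1 a)$ and that, because $\sigma_2$ never rewrites eigenvariables and fixes free variables outside its domain, its eigenvariables split as
\[
\mathrm{EV}(\sigma_2(\sigma_1 a)) \;=\; \mathrm{EV}(\sigma_1 a)\ \cup \bigcup_{c \in \mathrm{FV}(\sigma_1 a)\,\cap\,\mathrm{dom}(\sigma_2)} \mathrm{EV}(\sigma_2 c).
\]
This reduces the goal to two cases according to which part of the union contains $b$.

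In the first case $b \in \mathrm{EV}(\sigma_1 a)$, which forces $a \in \mathrm{dom}(\sigma_1)$ (otherwise $\sigma_1 a = a$ has no eigenvariables), and then $\mathsf{Scope}(L,\sigma_1)$ applied at $a$ immediately yields $(b,n') \in L$ with $n' < n$. The substantive case is $b \in \mathrm{EV}(\sigma_2 c)$ for some $c \in \mathrm{FV}(\sigma_1 a) \cap \mathrm{dom}(\sigma_2)$, where I would invoke $\mathsf{Scope}(\sigma_1(L+L'),\sigma_2)$ at $c$ to get $(b,m') \in \sigma_1(L+L')$ with $m' < m$, writing $m = (\sigma_1(L+L'))(c)$. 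Before this I would isolate two facts about the updating operation as a short preliminary lemma: (i) since every variable of $L'$ is free, $(L+L')_{\mathrm{eigen}} = L_{\mathrm{eigen}}$, so an eigenvariable occurs in $\sigma_1(L+L')$ only through $L_{\mathrm{eigen}}$ with its original value, whence $(b,m') \in \sigma_1(L+L')$ actually gives $(b,m') \in L$; and (ii) updating is non-increasing, namely if $c \in \mathrm{FV}(\sigma_1 x)$ for some $x$ with $(x,n_x) \in L+L'$ then $(\sigma_1(L+L'))(c) \le n_x$, because the value attached to $c$ in $L^{*}$ is $M^{x} = \min\{(L+L')(x),\,(L+L')(\mathrm{fv}(x,\sigma_1))\} \le (L+L')(x)$ and the $|{-}|$ collapse only lowers it.

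It then remains to show $m \le n$. From $\mathrm{fst}(L)\,\#\,\mathrm{fst}(L')$ I would note $(L+L')(a) = L(a) = n$. If $a \in \mathrm{dom}(\sigma_1)$, then $c \in \mathrm{FV}(\sigma_1 a)$ and fact (ii) with $x = a$ gives $m \le n$ at once. If $a \notin \mathrm{dom}(\sigma_1)$, then $\sigma_1 a = a$ forces $c = a \in \mathrm{dom}(\sigma_2)$, and I must argue that $a$ survives the update with value at most $n$: either $a$ is carried through updating unchanged (value still $n$), or $a \in \mathrm{FV}(\mathrm{codom}(\sigma_1))$, in which case fact (ii) applies with $x$ the relevant domain variable of $\sigma_1$; the idempotence $\mathrm{dom}(\sigma_1)\,\#\,\mathrm{codom}(\sigma_1)$ is what keeps $c = a \notin \mathrm{dom}(\sigma_1)$ consistent throughout. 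Chaining $m' < m \le n$ and appealing to fact (i), $m'$ is exactly the scope value of $b$ in $L$, which finishes the case and the lemma.

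The main obstacle I anticipate is precisely the bookkeeping of scope values through $\sigma_1(L+L')$: proving the non-increasing property (ii) cleanly in the presence of the $|{-}|$ min-collapse and the append $+\,L_{\mathrm{eigen}}$, and correctly treating the boundary case $a \in \mathrm{dom}(\sigma_2) \setminus \mathrm{dom}(\sigma_1)$, where one must confirm that a free variable already present in $L$ is never silently dropped from the updated environment with a larger effective scope value. Factoring facts (i) and (ii) into a standalone updating lemma should reduce the main argument to the routine two-case analysis above.
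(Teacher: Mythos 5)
Your proposal is correct and matches the paper's own proof essentially step for step: the same case split on whether $a \in \mathrm{dom}(\sigma_2\cdot\sigma_1)$ is recorded in $L$, the same decomposition of $\mathrm{EV}(\sigma_2(\sigma_1 a))$ into eigenvariables of $\sigma_1 a$ (discharged by $\mathsf{Scope}(L,\sigma_1)$) and eigenvariables introduced by $\sigma_2$ on free variables of $\sigma_1 a$ (discharged by $\mathsf{Scope}(\sigma_1(L+L'),\sigma_2)$), and your facts (i) and (ii) are exactly the paper's two observations that updating leaves the scope values of eigenvariables in $L$ unchanged and that $(\sigma_1(L+L'))(u) \leq L(b)$ holds by the definition of updating. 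The boundary case you flag ($a \in \mathrm{dom}(\sigma_2)\setminus\mathrm{dom}(\sigma_1)$, where one must check the free variable survives updating with value at most $L(a)$) is handled in the paper by the same one-line appeal to the definition of $\sigma_1(L+L')$, so your extra care there refines rather than diverges from the published argument.
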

\begin{proof}
  \begin{itemize}
  \item Case. $b \in \mathrm{dom}(\sigma_2 \cdot \sigma_1) - \mathrm{fst}(L)$. In
    this case there is nothing to check.

  \item Case. $b \in \mathrm{dom}(\sigma_2 \cdot \sigma_1) \cap \mathrm{fst}(L)$.

    We need to show
    for any $a \in \mathrm{EV}(\sigma_2 (\mathrm{FV}(\sigma_1 b))) \cup \mathrm{EV}(\sigma_1 b)$, $L(a) < L(b)$.
    Suppose $b \in \mathrm{dom}(\sigma_1)$. Since we
    know $\mathsf{Scope}(L, \sigma_1)$, so if $a \in \mathrm{EV}(\sigma_1 b)$, we have $L(a) < L(b)$. Let $u \in \mathrm{FV}(\sigma_1 b)$. 
    \begin{itemize}
    \item Suppose $u \notin \mathrm{dom}(\sigma_2)$. There is nothing to check. 
    \item Suppose $u \in \mathrm{dom}(\sigma_2)$. By definition of $\sigma_1(L+L')$, we have
      $u \in \mathrm{fst}(\sigma_1(L+L'))$. Note that $\sigma_1(L + L')$ does not change the scope values of the eigenvariables in $L$. Thus $\mathsf{Scope}(\sigma_1(L + L'), \sigma_2)$ implies that for any $q \in \mathrm{EV}(\sigma_2 u)$, we have $(\sigma_1(L + L'))(q) = L(q) < (\sigma_1(L + L'))(u) \leq L(b)$. Note that $(\sigma_1(L + L'))(u) \leq L(b)$ is by the definition of $\sigma_1(L + L')$.
      

    \end{itemize}

    Suppose $b \in \mathrm{dom}(\sigma_2)$, we just need to show
    for any $q \in \mathrm{EV}(\sigma_2 b)$, $L(q) < L(b)$. Since $\mathsf{Scope}(\sigma_1(L + L'), \sigma_2)$, we know $(\sigma_1(L + L'))(q) = L(q) < (\sigma_1(L + L'))(b) \leq L(b)$. 


  \end{itemize}
\end{proof}

\begin{lemma}[Scope check invariant]
  \label{sc:inv}

  
  \begin{enumerate}
  \item
    If $(\{(L_1, \Gamma_1,e_1,T_1),..., (L_n, \Gamma_n,e_n,T_n)\},\sigma) \longrightarrow$

    $(\{(L_1', \Gamma_1',e_1',T_1'),..., (L_m', \Gamma_m',e_m',T_m')\}, \sigma'\cdot \sigma)$,
    then $\mathsf{Scope}(L_i, \sigma')$ for all $i$.
  \item If $(\{(L_1, \Gamma_1,e_1,T_1),..., (L_n, \Gamma_n,e_n,T_n)\},\sigma) \longrightarrow^*$

    $(\{(L_1', \Gamma_1',e_1',T_1'),..., (L_m', \Gamma_m',e_m',T_m')\}, \sigma'\cdot \sigma)$,
    then $\mathsf{Scope}(L_i, \sigma')$ for all $i$. 
  \end{enumerate}
\end{lemma}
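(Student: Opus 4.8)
The plan is to prove the two parts in sequence, since part (2) is an induction on the length of the reduction whose step case feeds on part (1) together with Lemma~\ref{compose}. Part (1) is essentially a bookkeeping exercise: I would read the required scope check directly off the side conditions of the rules in Definition~\ref{rsm}, and part (2) is where the real content lives, namely transporting scope checks across composition.

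For part (1), I would proceed by case analysis on which single-step rule fires. For $\longrightarrow_s$, $\longrightarrow_a$, and $\longrightarrow_b$, the rule that acts on a tuple $(L,\Gamma,e,T)$ carries $\mathsf{Scope}(L,\sigma')$ as an \emph{explicit} premise, so the scope check holds for the active tuple by hypothesis. For every other tuple in the surrounding $\Psi$, the right-hand side of each rule contains $\sigma'\Psi$, and by the definition of $\sigma\Psi$ at the end of Definition~\ref{rsm} the expression $\sigma'\Psi$ is only defined when $\mathsf{Scope}(L_j,\sigma')$ holds for each of its tuples $(L_j,\ldots)$. Hence well-definedness of the transition already forces $\mathsf{Scope}(L_i,\sigma')$ for all $i$. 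The remaining rule $\longrightarrow_i$ leaves the substitution component untouched, i.e. its contributed $\sigma'$ is $\mathrm{id}$, for which $\mathsf{Scope}(L_i,\mathrm{id})$ holds vacuously. This disposes of part (1).

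For part (2), I would induct on the number of steps in $\longrightarrow^*$. The base case is the empty reduction, where the contributed substitution is $\mathrm{id}$ and $\mathsf{Scope}(L_i,\mathrm{id})$ is trivial. For the step case I would split the reduction as a first step producing $\sigma_1$ followed by a tail producing $\sigma_2$, so the accumulated new substitution is $\sigma_2\cdot\sigma_1$ and the goal is $\mathsf{Scope}(L_i,\sigma_2\cdot\sigma_1)$ for each initial $L_i$. Part (1) gives $\mathsf{Scope}(L_i,\sigma_1)$. Tracking each initial tuple through the first step, its descendant tuples in the intermediate state carry the scope environment $\sigma_1(L_i+L_i')$, where $L_i'$ lists the fresh \emph{free} variables introduced by that step ($L_i'=\emptyset$ for a tuple the step does not act on); the induction hypothesis applied to the tail then yields $\mathsf{Scope}(\sigma_1(L_i+L_i'),\sigma_2)$. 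Feeding $\mathsf{Scope}(L_i,\sigma_1)$ and $\mathsf{Scope}(\sigma_1(L_i+L_i'),\sigma_2)$ into Lemma~\ref{compose} gives exactly $\mathsf{Scope}(L_i,\sigma_2\cdot\sigma_1)$, closing the induction.

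The main obstacle will be discharging the side conditions that Lemma~\ref{compose} demands of the pair $(\sigma_1,\sigma_2)$: that $\mathrm{dom}(\sigma_1)\#\mathrm{codom}(\sigma_1)$, that $\mathrm{dom}(\sigma_2)\#\mathrm{codom}(\sigma_2)$, and that $\mathrm{FV}(\mathrm{codom}(\sigma_2))\cap\mathrm{dom}(\sigma_1)=\emptyset$, together with $\mathrm{fst}(L_i)\#\mathrm{fst}(L_i')$ and the freeness of the variables in $L_i'$. The last two are immediate from the fresh-variable discipline of the rules (the $L_i'$ are freshly introduced free variables, disjoint from everything already present). The apartness of each single unifier follows from the success condition $\mathrm{dom}(\sigma')\#\mathrm{codom}(\sigma')$ built into $\sim_{\sigma'}$; the corresponding property for the composite $\sigma_2$ and the disjointness $\mathrm{FV}(\mathrm{codom}(\sigma_2))\cap\mathrm{dom}(\sigma_1)=\emptyset$ require the observation that each step substitutes away its domain variables (so they cannot reappear in any later codomain) and only ever introduces fresh free variables. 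I would therefore carry these apartness and freshness facts as an auxiliary invariant of reachable states, established in parallel with the induction, so that Lemma~\ref{compose} applies at every step.
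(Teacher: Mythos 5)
Your overall strategy coincides with the paper's: part (1) by case analysis, reading the scope checks off the explicit rule premises and off the well-definedness of $\sigma'\Psi$, and part (2) by induction on the length of the reduction, splitting off the first step, applying part (1) to it, the induction hypothesis to the tail, and Lemma~\ref{compose} to compose the two substitutions. For the rules $\longrightarrow_s$, $\longrightarrow_a$, $\longrightarrow_b$ this is exactly the paper's argument.

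The gap is in your uniform treatment of the step case, specifically for $\longrightarrow_i$. You claim that after the first step each descendant tuple carries a scope environment of the form $\sigma_1(L_i+L_i')$ where $L_i'$ consists of fresh \emph{free} variables. For $\longrightarrow_i$ this is false: that rule introduces fresh \emph{eigenvariables} $a_1,\ldots,a_m$ with scope values $\mathrm{max}(L)+1,\ldots,\mathrm{max}(L)+m$, so the descendant environment is $[L,(a_1,n'+1),\ldots,(a_m,n'+m)]$, and Lemma~\ref{compose} --- whose hypotheses explicitly require all variables of $L'$ to be free variables --- cannot be invoked. Its proof genuinely uses that hypothesis: it relies on $\sigma_1(L+L')$ not changing the scope values of eigenvariables of $L$, and on the inequality $(\sigma_1(L+L'))(u)\leq L(b)$, both of which are tailored to $L'$ being a list of free variables. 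What saves this case is a separate, direct argument, which is how the paper handles it: $\longrightarrow_i$ contributes $\sigma_1=\mathrm{id}$, so the goal reduces to deriving $\mathsf{Scope}(L,\sigma_2)$ from the induction hypothesis $\mathsf{Scope}([L,(a_1,n'+1),\ldots,(a_m,n'+m)],\sigma_2)$; this holds because every variable of $L$ has scope value at most $\mathrm{max}(L)$, strictly below the values of the fresh eigenvariables, so the extended scope check already forbids any $b\in\mathrm{dom}(\sigma_2)\cap\mathrm{fst}(L)$ from being mapped to a type mentioning $a_1,\ldots,a_m$, and the witnessing eigenvariables must therefore lie in $L$ itself with smaller values. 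Once you carve out $\longrightarrow_i$ and supply this argument (or strengthen Lemma~\ref{compose} to tolerate fresh eigenvariables whose scope values exceed $\mathrm{max}(L)$), your proof matches the paper's; your remaining obligations about freshness and the apartness side conditions of Lemma~\ref{compose} are likewise left implicit in the paper's own proof.
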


\begin{proof}
  \begin{enumerate}
  \item By case reasoning on $\longrightarrow$.
  \item By induction on the length of $\longrightarrow$.
    \begin{itemize}
    \item Base case: By (1).
    \item Step case:
      
      $(\{(L, \Gamma, (x | c)\ e_1 \ ... \ e_n, T), (L_1, \Gamma''_1, e_1'', T_1''),..., (L_k, \Gamma''_k,e_k'', T_k'') \}, \sigma) \longrightarrow_a $

      $(\{(\sigma'L', \sigma'\Gamma, e_1, \sigma' T_1'), ..., (\sigma' L', \sigma'\Gamma, e_l, \sigma' T_l'), (\sigma' L', \sigma'\Gamma, e_{l+1}, \sigma' b_{1}), ...,$

      $\quad \quad (\sigma' L', \sigma'\Gamma, e_{n}, \sigma' b_{n-l}), (\sigma' L_1, \sigma' \Gamma''_1, e_1'', \sigma' T_1''),..., (\sigma' L_k, \sigma' \Gamma''_k,e_k'', \sigma' T_k'')\}, \sigma' \cdot \sigma) \longrightarrow^*$
      $(\Psi, \sigma'' \cdot \sigma' \cdot \sigma)$,

          where $ l \leq n > 0$, $(x | c) : \forall a_1 ... \forall a_k. T_1',..., T_l' \to T' \in \Gamma$ and $\mathsf{Scope}(L, \sigma')$. Here $\sigma', L'$ is defined by the following.

    \begin{itemize}

    \item $T' \sim_{\sigma'} (b_{1} ,..., b_{n-l} \to T)$, where $b_{1},..., b_{n-l}$ are fresh free variables. .
    \item $L' = L + [(a_i, n') | 1 \leq i \leq k] + [(b_j, n') | 1\leq b_j \leq n-l]$, where $n'= \mathrm{max}(L)+1$. Moreover, $a_1,..., a_k$ are fresh free variables.  
    \end{itemize}

    \
    
      By IH, we know that $\mathsf{Scope}(\sigma' L', \sigma'')$ and $\mathsf{Scope}(\sigma'L_i, \sigma'')$ for all $i$. By Lemma \ref{compose}, we have $\mathsf{Scope}(L, \sigma'' \cdot \sigma')$ and $\mathsf{Scope}(L_i, \sigma'' \cdot \sigma')$ for all $i$.

    \item The step case for $\longrightarrow_b$ is similar to $\longrightarrow_a$.

    \item Step case:
      \
      
          $(\{(L, \Gamma, \lambda x_1. ... \lambda x_n . e \ e_1' ... e_l', \forall a_1. ... \forall a_m . T), (L_1, \Gamma''_1, e_1'', T_1''),..., (L_k, \Gamma''_k,e_k'', T_k'') \}, \sigma)$

      $ \longrightarrow_i (\{([L, (a_1, n'+1),..., (a_m, n'+m)], [\Gamma, x_1 : T_1,..., x_n : T_n], e \ e_1' ... e_l',  T), $

      $\quad\quad\quad \quad (L_1,  \Gamma''_1, e_1'',  T_1''),..., ( L_k, \Gamma''_k,e_k'',  T_k'')\}, \sigma) \longrightarrow^*$
      $(\Psi, \sigma' \cdot \sigma)$, where $n, l \geq 0$, $m >0$, $a_1,..., a_m$ are fresh eigenvariables and $n' = \mathrm{max}(L)$.

      \
      
            By IH, we have $\mathsf{Scope}(L_i, \sigma')$ for all $i$ and $\mathsf{Scope}([L, (a_1, n'+1),..., (a_m, n'+m)], \sigma')$. Thus $\mathsf{Scope}(L, \sigma')$ as $a_1,..., a_m$ are fresh eigenvariables.

          \item Step case:
            \
            
            $(\{(L, \Gamma, (x | c), T), (L_1, \Gamma''_1, e_1'', T_1''),..., (L_k, \Gamma''_k,e_k'', T_k'')\}, \sigma) \longrightarrow_s $

            $(\{(\sigma'L', \sigma'\Gamma, \diamond, \diamond), (\sigma' L_1, \sigma' \Gamma''_1, e_1'', \sigma' T_1''),..., (\sigma' L_k, \sigma' \Gamma''_k,e_k'', \sigma'T_k'')\}, \sigma' \cdot \sigma)$
      $\longrightarrow^* (\Psi, \sigma'' \cdot \sigma' \cdot \sigma)$, where $(x|c) : T' \in \Gamma$ and $\mathsf{Scope}(L, \sigma')$, where $(x|c) : \forall a_1 ... \forall a_k . T'' \in \Gamma$. Here $\sigma', L'$ is defined by the followings.
    \begin{itemize}
    \item $T'' \sim_{\sigma'} T$
    \item $L' = L + [(a_i, n)\ |\ 0 < i \leq k, n = \mathrm{max}(L)+1]$. Note that $a_1,..., a_k$ are fresh free variables in $T''$. 
    \end{itemize}

    \
    
    By IH, we know that $\mathsf{Scope}(\sigma'L', \sigma'')$ and  $\mathsf{Scope}(\sigma'L_i, \sigma'')$ for all $i$. By Lemma \ref{compose}, we have $\mathsf{Scope}(L, \sigma'' \cdot \sigma')$ and $\mathsf{Scope}(L_i, \sigma'' \cdot \sigma')$ for all $i$. 

    \end{itemize}
  \end{enumerate}

\end{proof}

\begin{lemma}[Scope environment records free variables]
  \label{record}
Suppose we have the following transition. 
  
\noindent  {\footnotesize $(\{(L_1,\Gamma_1,e_1,T_1),..., (L_n,\Gamma_n,e_n,T_n)\},\sigma)\longrightarrow$
 $(\{(L_1', \Gamma_1',e_1',T_1'),..., (L_m', \Gamma_m',e_m',T_m')\}, \sigma'\sigma)$}.

  If $(\mathrm{FV}(\Gamma_i) \cup \mathrm{FV}(T_i)) \subseteq \mathrm{fst}(L_i)$ for $1 \leq i \leq n$, then $(\mathrm{FV}(\Gamma_j') \cup \mathrm{FV}(T_j')) \subseteq \mathrm{fst}(L_j')$ for $1 \leq j \leq m$. 
\end{lemma}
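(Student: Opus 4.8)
The plan is to prove the statement by case analysis on the single rule of $\longrightarrow$ that fires, namely one of $\longrightarrow_s$, $\longrightarrow_i$, $\longrightarrow_a$, $\longrightarrow_b$. In each rule only the head tuple $(L_1, \Gamma_1, e_1, T_1)$ is consumed; the remaining tuples of $\Psi$ are either carried over verbatim (under $\longrightarrow_i$, which generates no substitution) or transported along the generated unifier $\sigma'$ via the action $\sigma'\Psi$ of Definition~\ref{rsm}. So I would split each case into two obligations: the newly produced goal tuples coming from the head, and the preservation of the invariant on the transported tail. Note that the side conditions of the rules, together with Lemma~\ref{sc:inv}, guarantee $\mathsf{Scope}(L, \sigma')$, so each updated environment $\sigma' L'$ and $\sigma' L_i$ is well defined.

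Before the case analysis I would isolate the crucial sub-lemma about the updating operation: if $\mathrm{FV}(\Gamma) \cup \mathrm{FV}(T) \subseteq \mathrm{fst}(L)$ and $\mathsf{Scope}(L, \sigma)$ holds, then $\mathrm{FV}(\sigma \Gamma) \cup \mathrm{FV}(\sigma T) \subseteq \mathrm{fst}(\sigma L)$. This follows by unfolding $\sigma L = |L'| + L_{\mathrm{eigen}}$ together with $\mathrm{fv}(a, \sigma) = \mathrm{FV}(\mathrm{codom}(\sigma a))$: every free variable $u$ occurring in $\sigma \Gamma$ or $\sigma T$ arises either from some $a \in \mathrm{dom}(\sigma) \cap \mathrm{fst}(L)$, in which case $u \in \mathrm{fv}(a, \sigma)$ and hence $u \in \mathrm{fst}(\sigma L)$ by the construction of $L'$, or from a free variable already in $\mathrm{fst}(L)$ that $\sigma$ leaves untouched and therefore preserves. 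This sub-lemma discharges the tail tuples immediately, since each $(L_i, \Gamma_i, e_i, T_i)$ in $\Psi$ satisfies the hypothesis by assumption, and the $\longrightarrow_i$ case needs no substitution at all.

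For the head tuples I would first verify that every freshly introduced free variable is recorded in the extended environment $L'$ before $\sigma'$ is applied. In $\longrightarrow_s$ (which produces a finished $\diamond$-tuple carrying only $\sigma'\Gamma$, so that no goal type contributes) and in $\longrightarrow_b$, the type of $(x|c)$ is instantiated by replacing the bound $a_1, \ldots, a_k$ with fresh free variables appended to $L'$ with scope value $\mathrm{max}(L)+1$; since $\mathrm{FV}(\Gamma) \subseteq \mathrm{fst}(L)$ and the free variables of the instantiated body lie in $\mathrm{fst}(L) \cup \{a_1, \ldots, a_k\}$, we obtain $\mathrm{FV}(\Gamma) \cup \mathrm{FV}(T_i') \subseteq \mathrm{fst}(L')$, and the sub-lemma yields the claim after transport by $\sigma'$. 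The rule $\longrightarrow_a$ is identical except that over-application also contributes the fresh free variables $b_1, \ldots, b_{n-l}$, which are likewise appended to $L'$, so each new goal type $\sigma' b_j$ is covered. Finally $\longrightarrow_i$ introduces no substitution and only adds $a_1, \ldots, a_m$ to $L$ as eigenvariables and the bindings $x_i : T_i$ to $\Gamma$; since eigenvariables do not contribute to $\mathrm{FV}$ and the free variables of each $T_i$ and of $T$ coincide with those of the original forall-type, already in $\mathrm{fst}(L)$, the invariant is preserved directly.

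The main obstacle I expect is the sub-lemma about updating, specifically pinning down two facts about $\mathrm{fst}(\sigma L)$: that the fresh free variables introduced into $\mathrm{codom}(\sigma')$ by the imitation rule $\theta_n^m(C)$ of the bidirectional matching are captured (which they are, since $\mathrm{fv}(a, \sigma')$ ranges over the entire codomain of $\sigma' a$), and that free variables of $L$ lying outside $\mathrm{dom}(\sigma')$ are not silently dropped by the $|L'| + L_{\mathrm{eigen}}$ construction. Once this bookkeeping is settled, the remaining cases are routine, and the extension rules of Definition~\ref{extensions} follow the same pattern, as each of them only appends fresh free variables to $L$ with an appropriate scope value.
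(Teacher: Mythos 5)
Your proposal is correct and matches the paper's approach: the paper's entire proof is the one-line ``By case analysis on Definition~\ref{rsm} and the definition of $\sigma L$,'' which is exactly the case analysis plus updating-operation sub-lemma that you spell out in detail. The bookkeeping subtlety you flag (that free variables of $L$ outside $\mathrm{dom}(\sigma')$ must survive the $|L'| + L_{\mathrm{eigen}}$ construction) is a genuine ambiguity in the paper's formal definition of $\sigma L$, and your resolution---reading it, as the surrounding prose indicates, as \emph{replacing} only the pairs whose variable lies in $\mathrm{dom}(\sigma')$ while keeping the rest---is the reading under which the lemma holds.
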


\begin{proof}
  By case analysis on Definition \ref{rsm} and the definition of $\sigma L$.
\end{proof}

In the following proof, we use $\sigma |_{\underline{a}}$ to mean restricting the domains of
$\sigma$ to be $\{a_1,..., a_k\}$ for some $k$. We use $\forall \underline{a}. T$ to
denote $\forall a_1. ... \forall a_k . T$. Furthermore, $e\ (\sigma \underline{a})$ means
$e\ \sigma a_1 \ ... \ \sigma a_k$. 
\begin{theorem}[Soundness]
  If $(\{(L_1, \Gamma_1,e_1,T_1),..., (L_n, \Gamma_n,e_n,T_n)\},\sigma) \longrightarrow^*$

  \noindent $(\{(L_1', \Gamma_1', \diamond, \diamond),..., (L_m', \Gamma_m',\diamond,\diamond)\}, \sigma'\cdot \sigma)$ and $(\mathrm{FV}(\Gamma_i) \cup \mathrm{FV}(T_i)) \subseteq \mathrm{fst}(L_i)$ for $1 \leq i \leq n$, then there exists $p_i$ such that $\sigma' \Gamma_i \vdash p_i : \sigma' T_i$ and $|p_i| = e_i$ for all $i$.

\end{theorem}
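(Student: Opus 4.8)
The plan is to argue by induction on the length of the transition sequence $\longrightarrow^*$, proving the statement for a general state in which some tuples already carry $\diamond$ (for these there is nothing to show) while the remaining tuples carry a genuine expression and goal type. The hypothesis $(\mathrm{FV}(\Gamma_i)\cup\mathrm{FV}(T_i))\subseteq\mathrm{fst}(L_i)$ is an invariant preserved along the transition by Lemma~\ref{record}, so I may assume it at every intermediate state. In the base case the whole initial state is already of the $\diamond$-form, so $\sigma'=\mathrm{id}$ and the claim is vacuous. In the step case I would single out the first tuple, which fires exactly one of the four rules of Definition~\ref{rsm}; writing $\tau$ for the local unifier it produces (so that $\sigma'=\rho\cdot\tau$, where $\rho$ is the substitution accumulated by the remaining $\longrightarrow^*$), I apply the induction hypothesis to that remaining transition. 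The IH immediately yields the annotations $p_j$ for all the other tuples, since after the first step their environments and types are carried under $\tau$ and the IH reports them under $\rho$, giving $\sigma'\Gamma_j\vdash p_j:\sigma' T_j$; it then remains only to build an annotation for the expression of the fired tuple.

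For the rules $\longrightarrow_s$, $\longrightarrow_a$ and $\longrightarrow_b$, the fired expression has the shape $(x|c)\,e_1\cdots e_n$ with $(x|c):\forall\underline a.\,T_1',\dots,T_l'\to T'\in\Gamma$. I would construct the head of the annotation by type-instantiating the declared polymorphic type at the values the unifier assigns to the fresh free variables, i.e. $(x|c)\,(\sigma'\underline a)$, which by the type-application rule of Figure~\ref{fig:systemfo} has type $\sigma'(T_1',\dots,T_l'\to T')$ (freshness of $a_1,\dots,a_k$ is what makes instantiation at $\sigma'\underline a$ reproduce $\sigma' T'$). Theorem~\ref{unif:sound} gives $\tau T'\equiv\tau(b_1,\dots,b_{n-l}\to T)$, and respectively $\tau T'\equiv\tau T$ for the other rules, so after closing under $\rho$ and invoking the type-equivalence rule the head can be applied to the argument annotations $p_1,\dots,p_n$ supplied by the IH, yielding a derivation of $\sigma'\Gamma\vdash p:\sigma' T$. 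Closure of typing under substitution (Lemma~\ref{type:subst}) is what lets me transport these partial derivations across the composite substitution, and the disjointness conditions $\mathrm{dom}(\sigma')\#\mathrm{codom}(\sigma')$ built into a successful matching are what make the substitutions commute as needed. The erasure obligation $|p|=(x|c)\,e_1\cdots e_n$ is routine, since type arguments and type applications vanish under $|\cdot|$.

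The rule $\longrightarrow_i$ is the type-abstraction case and I expect it to be the main obstacle. Here the fired tuple is $(L,\Gamma,\lambda x_1\cdots\lambda x_n.\,e\,e_1\cdots e_k,\forall a_1\cdots\forall a_m.\,T_1,\dots,T_n\to T)$ and the transition introduces fresh eigenvariables $a_1,\dots,a_m$ (with scope values exceeding $\mathrm{max}(L)$) together with the bindings $x_j:T_j$. The IH produces $q$ with $\sigma'[\Gamma,x_1:T_1,\dots,x_n:T_n]\vdash q:\sigma' T$, and I would set $p=\lambda a_1\cdots\lambda a_m.\,\lambda x_1:\sigma' T_1.\,\cdots\,\lambda x_n:\sigma' T_n.\,q$; applying the abstraction rule $n$ times and the type-abstraction rule $m$ times gives $\sigma'\Gamma\vdash p:\sigma'(\forall\underline a.\,T_1,\dots,T_n\to T)$. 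The difficulty is that each use of the type-abstraction rule carries the side condition $a_i\notin\mathrm{FV}(\sigma'\Gamma)$, and this is exactly where the scope discipline is indispensable: by Lemma~\ref{record} every free variable of $\Gamma$ is recorded in $L$ with scope value at most $\mathrm{max}(L)$, which is strictly smaller than the scope value of each $a_i$, and by Lemma~\ref{sc:inv} the accumulated $\sigma'$ satisfies $\mathsf{Scope}(L,\sigma')$, so no free variable of $\Gamma$ can be substituted by a type containing a later eigenvariable $a_i$; hence $a_i\notin\mathrm{FV}(\sigma'\Gamma)$ and the side condition holds. The composition of the scope checks across the two-stage substitution $\rho\cdot\tau$ is handled by Lemma~\ref{compose}. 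Finally $|p|=\lambda x_1\cdots\lambda x_n.\,(e\,e_1\cdots e_k)$ matches the fired expression, which completes the case and the induction.
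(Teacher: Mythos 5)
Your proposal follows essentially the same route as the paper's own proof: induction on the length of the transition sequence, constructing the annotated head $(x|c)\ (\sigma'\,\underline{a})$ via type instantiation plus Lemma~\ref{type:subst} for the rules $\longrightarrow_s,\longrightarrow_a,\longrightarrow_b$ (with matching soundness, Theorem~\ref{unif:sound}, justifying the type equivalences), and discharging the side condition $a_i\notin\mathrm{FV}(\sigma'\Gamma)$ in the $\longrightarrow_i$ case through the scope discipline via Lemmas~\ref{sc:inv} and~\ref{record}, exactly as the paper does. The only differences are cosmetic and harmless: you state explicitly the generalization allowing $\diamond$-tuples in intermediate states (which the paper leaves implicit), and your base case is the zero-length transition rather than the paper's single $\longrightarrow_s$ step.
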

\begin{proof} By induction on the length of $(\{(L_1, \Gamma_1, e_1, T_1),..., (L_n, \Gamma_n, e_n, T_n)\},\sigma) \longrightarrow^* (\{(L_1', \Gamma_1', \diamond, \diamond),..., (L_m', \Gamma_m',\diamond,\diamond)\}, \sigma'\cdot \sigma)$.
  \begin{itemize}

  \item Base case: $(\{(L, \Gamma, (x | c), T)\}, \sigma) \longrightarrow_s (\{(\sigma'L', \sigma'\Gamma, \diamond, \diamond)\}, \sigma' \cdot \sigma)$

    if $(x|c) :  \forall a_1 ... \forall a_k . T'' \in \Gamma$ and $\mathsf{Scope}(L, \sigma')$. Here $\sigma', L'$ is defined by the followings.
    \begin{itemize}

      

      \item $T'' \sim_{\sigma'} T$
    \item  $L' = L + [(a_i, n)\ |\ 0 < i \leq k, n = \mathrm{max}(L)+1]$. Note that $a_1,..., a_k$ are the fresh free variables in $T''$. 
    \end{itemize}
      In this case, since $\Gamma \vdash (x | c) : \forall a_1 ... \forall a_n. T''$, by instantiation, we have
    $\Gamma \vdash (x | c)\ (\sigma' \underline{a}) : \sigma'T''$. By Lemma \ref{type:subst}, we have $\sigma' \Gamma \vdash (x | C)\ (\sigma' \underline{a}) :  \sigma' \sigma'(T'') \equiv \sigma' T'' \equiv \sigma' T$. 

    \item Step case: 

      $(\{(L, \Gamma, (x|c)\ e_1 \ ... \ e_n, T), (L_1, \Gamma''_1, e_1'', T_1''),..., (L_k, \Gamma''_k,e_k'', T_k'') \}, \sigma) \longrightarrow_a $

      $(\{(\sigma'L', \sigma'\Gamma, e_1, \sigma' T_1'), ..., (\sigma' L', \sigma'\Gamma, e_l, \sigma' T_l'), $

      \quad\quad$(\sigma' L', \sigma'\Gamma, e_{l+1}, \sigma' b_{1}), ..., (\sigma' L', \sigma'\Gamma, e_{n}, \sigma' b_{n-l}),$

      $\quad \quad \quad (\sigma' L_1, \sigma' \Gamma''_1, e_1'', \sigma' T_1''),..., (\sigma' L_k, \sigma' \Gamma''_k,e_k'', \sigma' T_k'')\}, \sigma' \cdot \sigma) \longrightarrow^*$
      
      $(\{(L_1', \Gamma_1', \diamond, \diamond),..., (L_m', \Gamma_m',\diamond,\diamond)\}, \sigma'' \cdot \sigma' \cdot \sigma)$, 

      \noindent where $l \leq n > 0$, $(x | c) : \forall a_1 ... \forall a_k. T_1',..., T_l' \to T' \in \Gamma$ and $\mathsf{Scope}(L, \sigma')$. Here $\sigma', L'$ is defined by the following.
      \begin{itemize}
    \item $T' \sim_{\sigma'} (b_{1} ,..., b_{n-l} \to T)$, where $b_{1},..., b_{n-l}$ are fresh free variables.
    \item $n'= \mathrm{max}(L)+1$, $L' = L + [(a_i, n') | 1 \leq i \leq k] + [(b_j, n') | 1\leq b_j \leq n-l]$. Note that $a_1,..., a_k$ are fresh free variables in $T_1',..., T_l' \to T'$. 
      \end{itemize}

      By IH, we know that $\sigma'' \sigma'\Gamma \vdash p_1 : \sigma''\sigma' T_1', ..., \sigma''' \sigma' \Gamma \vdash p_l' : \sigma''\sigma' T_l', \sigma''\sigma' \Gamma \vdash p_{l+1} : \sigma'' \sigma' b_{1},..., \sigma'' \sigma'\Gamma \vdash p_{n} : \sigma'' \sigma' b_{n-l}$, where $|p_1| = e_1,..., |p_l| = e_l, |p_{l+1}| = e_{l+1}, ..., |p_n| = e_n$.

      Furthermore, $\sigma'' \sigma' \Gamma''_1 \vdash  p_1' :  \sigma'' \sigma' T_1'',..., \sigma'' \sigma' \Gamma''_k \vdash p_k' : \sigma'' \sigma' T_k''$, $|p_1'| = e_1'',..., |p_k'| = e_k''$. 

    We know that $(x|c) : \forall \underline{a}. T_1',..., T_l' \to T' \in \Gamma$ and
    $T' \sim_{\sigma'} (b_{1} ,..., b_{n-l} \to T)$. By the instantiation typing rule, we have $\Gamma \vdash (x|c) \ (\sigma' \underline{a}) : \sigma'|_{\underline{a}} (T_1',...,  T_l' \to T')$. By Lemma \ref{type:subst}, we have

    \begin{center}
      $\sigma'\Gamma \vdash (x|c) \ (\sigma' \underline{a}) : \sigma' \sigma'|_{\underline{a}} (T_1',...,  T_l' \to T') \equiv \sigma'(T_1',..., T_l',  b_1,...,  b_{n-l} \to T) \equiv \sigma' T_1',..., \sigma' T_l', \sigma' b_1,..., \sigma' b_{n-l} \to \sigma' T'$. 
    \end{center}
    So by Lemma \ref{type:subst}, we have

    $\sigma'' \sigma' \Gamma \vdash (x|c)\ (\sigma''\sigma' \underline{a}): \sigma''\sigma' T_1',..., \sigma''\sigma' T_l', \sigma''\sigma' b_1,..., \sigma''\sigma' b_{n-l} \to \sigma'' \sigma' T'$.

    Finally, we have $\sigma''' \sigma''\Gamma \vdash ((x|c) \ (\sigma''\sigma' \underline{a}))\ p_1 \ ... \ p_n :\sigma'' \sigma' T'$

    and $|((x|c)\ (\sigma''\sigma' \underline{a}))\ p_1 \ ... \ p_n| = (x|c)\ e_1 \ ... \ e_n$. 

  \item Step case:
    
    $(\{(L, \Gamma, (x|c)\ e_1 \ ... \ e_n, T), (L_1, \Gamma''_1, e_1'', T_1''),..., (L_k, \Gamma''_k,e_k'', T_k'') \}, \sigma) \longrightarrow_b$

    $(\{(\sigma'L', \sigma'\Gamma, e_1, \sigma' T_1'), ..., (\sigma' L', \sigma'\Gamma, e_{n}, \sigma' T_n'),$

    \quad \quad $(\sigma' L_1, \sigma' \Gamma''_1, e_1'', \sigma' T_1''),..., (\sigma' L_k, \sigma' \Gamma''_k,e_k'', \sigma' T_k'')\},$ $ \sigma' \cdot \sigma) \longrightarrow^*$
    
    $(\{(L_1', \Gamma_1', \diamond, \diamond),..., (L_m', \Gamma_m',\diamond,\diamond)\}, \sigma'' \cdot \sigma' \cdot \sigma)$,

    where $0< n < l$, $(x | c) : \forall a_1 ... \forall a_k. T_1',..., T_l' \to T' \in \Gamma$ and $\mathsf{Scope}(L, \sigma')$. Here $\sigma', L'$ is defined by the following.
    
      \begin{itemize}
    \item $(T_{n+1}',..., T_l' \to T') \sim_{\sigma'} T$
    \item $L' = L + [(a_i, n') \ | 1 \leq i \leq k, n'= \mathrm{max}(L)+1 ]$. Note that $a_1,..., a_k$ are fresh free variables in $T_1',..., T_l' \to T'$. 
      \end{itemize}

    By IH, we know that $\sigma''\sigma' \Gamma \vdash p_1 : \sigma'' \sigma' T_1', ..., \sigma'' \sigma'\Gamma \vdash p_n : \sigma'' \sigma' T_n',  \sigma'' \sigma'\Gamma''_1 \vdash p_1' : \sigma'' \sigma'T_1'',..., \sigma'' \sigma' \Gamma''_k \vdash p_k' : \sigma'' \sigma' T_k''$, where $|p_1| = e_1, ..., |p_n| = e_n, |p_1'| = e_1'',..., |p_k'|= e_k''$. 

    We know that
    $(x|c) : \forall \underline{a}. T_1',..., T_l' \to T' \in \Gamma$ and $(T_{n+1}',..., T_l' \to T') \sim_{\sigma'} T$. By the instantiation rule, we know that $\Gamma \vdash (x|c)\ (\sigma' \underline{a}) : \sigma'|_{\underline{a}}(T_1',..., T_l' \to T')$. By Lemma \ref{type:subst}, we know that $\sigma'\Gamma \vdash (x|c)\ (\sigma' \underline{a}) : \sigma' \sigma'|_{\underline{a}}(T_1',..., T_l' \to T') \equiv \sigma' T_1',..., \sigma' T_n' \to \sigma' T$. 

By Lemma \ref{type:subst}, we have $\sigma''\sigma' \Gamma \vdash (x|c)\ (\sigma''\sigma' \underline{a}) : \sigma''\sigma' T_1',..., \sigma''\sigma' T_n' \to \sigma'' \sigma' T$. Finally, $\sigma''\sigma' \Gamma \vdash  (x|c)\ (\sigma''\sigma' \underline{a})\ p_1 \ ... \ p_n : \sigma''\sigma' T$, where $| (x|c)\ (\sigma''\sigma' \underline{a})\ p_1 \ ... \ p_n| = (x|c) \ e_1 \ ... \ e_n$.

\item Step case:

  $(\{(L, \Gamma, \lambda x_1.... \lambda x_n. e\ e'_1 ... e_l', \forall a_1. ...\forall a_m . T_1,..., T_n \to T),$

  \quad \quad$ (L_1, \Gamma''_1, e_1'', T_1''),..., (L_k, \Gamma''_k,e_k'', T_k'')\}, \sigma) \longrightarrow_i$

  $(\{([L, (a_1, n'+1),..., (a_m, n'+m)], [\Gamma, x_1 : T_1,..., x_n : T_n], e \ e'_1 ... e_l', T), $

  \quad $(L_1, \Gamma''_1, e_1'', T_1''),..., (L_k, \Gamma''_k, e_k'', T_k'')\}, \sigma) \longrightarrow^*$

  $(\{(L_1', \Gamma_1', \diamond, \diamond),..., (L_m', \Gamma_l',\diamond,\diamond)\}, \sigma' \cdot \sigma)$, where $n, l \geq 0, m > 0$, $a_1,..., a_m$ are fresh eigenvariables and $n' = \mathrm{max}(L)$.

  By IH, we know that
    $\sigma' \Gamma, x_1 : \sigma' T_1,..., x_n : \sigma' T_n \vdash p : \sigma' T,  \sigma' \Gamma''_1 \vdash p_1 : \sigma' T_1'',..., \sigma'\Gamma''_k \vdash p_k : \sigma' T_k''$, where $|p| = e\ e_1' ... \ e_l', |p_1| = e_1'',..., |p_k| = e_k''$. Thus $\sigma' \Gamma \vdash \lambda x_1...\lambda x_n . p : \sigma' T_1,..., \sigma' T_n \to \sigma' T$.

  By Lemma \ref{sc:inv}, we have $\mathsf{Scope}([L, (a_1, n'+1),..., (a_m, n'+m)], \sigma')$.
  Let $F = \mathrm{FV}(\Gamma) \cup \mathrm{FV}(\forall a_1. ...\forall a_m . T_1,..., T_n \to T)$, by Lemma \ref{record}, $F \subseteq \mathrm{fst}(L)$. Thus $\mathrm{EV}(\mathrm{codom}(\sigma'|_F)) \cap \{a_1,..., a_m\} = \emptyset$. Thus $\{a_1,...,a_m\} \cap \mathrm{FV}(\sigma'\Gamma) = \emptyset$, by the abstraction rule, we have $\sigma'\Gamma \vdash \lambda a_1. ... \lambda a_m. \lambda x_1...\lambda x_n . p : \forall \underline{a}.\sigma' T_1,..., \sigma' T_n \to \sigma' T$. Furthermore, $\forall \underline{a}.\sigma' T_1,..., \sigma' T_n \to \sigma' T = \sigma'(\forall a_1....\forall a_m. T_1,..., T_n \to T)$ (because $\mathrm{EV}(\mathrm{codom}(\sigma'|_F)) \cap \{a_1,..., a_m\} = \emptyset$) and
    $|\lambda a_1. ... \lambda a_m. \lambda x_1...\lambda x_n . p| = \lambda x_1...\lambda x_n . e\ e'_1 ... e_l'$.

    \item The final step case $\longrightarrow_s$ is proved similarly. 





  \end{itemize}
\end{proof}

\section{Type checking examples}
\label{examples}
In this section we will give two main examples of show we can benefit from the support of impredicative and second-order types. The prototype type checker is available at \url{https://github.com/fermat/higher-rank}. There are more examples in the \texttt{/examples} directory in the prototype.

Firstly, to familiar with the output of the type checker, consider the following program (\texttt{examples/ex1.hr}).
\begin{verbatim}
f :: (forall a . a) -> forall a . a
f = \ x -> x x

id :: forall a . a -> a
id x = x
\end{verbatim}

\noindent Our type checker will output the following annotated program, which is then checked
by a separated proof checker.

\begin{verbatim}
f :: (forall a . a) -> (forall a . a) =
  \ (x :: forall a . a) . x @((forall a . a) -> (forall a . a)) x 

id :: forall a . a -> a =
  \\ a0# . \ (x :: a0#) . x 
\end{verbatim}

\noindent Note that in the output program,
we use \texttt{\string\ x . e} to denote the usual lambda abstraction and
\texttt{\string\\ x . e} to denote type abstraction. Every lambda abstraction
is annotated with its type, since kind inference is decidable, we do not
annotate the type abstraction. The machine generated type variables are postfixed
with $\#$ symbol and we use $@ T$ to denote type instantiation. 
\subsection{Bird and Paterson's program}
Without the support of second-order types, Bird and Paterson have to write the
following program.

\begin{verbatim}
gfoldT :: forall m n b . 
              (forall a. m a -> n a) ->
                (forall a. Pair (n a) -> n a) ->
                  (forall a. n (Incr a) -> n a) ->
                    (forall a. Incr (m a) -> m (Incr a)) ->
                     Term (m b) -> n b
gfoldT v a l k (Var x) = v x
gfoldT v a l k (App p) = (a . mapP (gfoldT v a l k)) p
gfoldT v a l k (Lam t) = (l . gfoldT v a l k . mapT k) t

kfoldT :: forall c b . (c -> b) ->
                        (Pair b -> b) ->
                          (b -> b) ->
                           (Incr c -> c) ->
                             Term c -> b
kfoldT v a l k (Var x) = v x
kfoldT v a l k (App p) = (a . mapP (kfoldT v a l k)) p
kfoldT v a l k (Lam t) = (l . kfoldT v a l k . mapT k) t

showT :: Term String -> String
showT = kfoldT id showP (\ x -> lambda ++ x) showI
\end{verbatim}
\noindent Note that \texttt{kfoldT} has the exact same definition as \texttt{gfoldT},
but with a more specific type, i.e. if we instantiate \texttt{m} with \texttt{\string\ x . c}
and instantiate \texttt{n} with \texttt{\string\ x . b} in the type of \texttt{gfoldT}, then we
get the type for \texttt{kfoldT}. Moreover, when \texttt{kfoldT} is used by \texttt{showT}, \texttt{c} and \texttt{b} are both instantiated with \texttt{String}.

With the support of the second-order type, we now can write the following program. 

\begin{verbatim}
gfoldT :: forall m n b . 
              (forall a. m a -> n a) ->
                (forall a. Pair (n a) -> n a) ->
                  (forall a. n (Incr a) -> n a) ->
                    (forall a. Incr (m a) -> m (Incr a)) ->
                     Term (m b) -> n b
gfoldT v a l k (Var x) = v x
gfoldT v a l k (App p) = (a . mapP (gfoldT v a l k)) p
gfoldT v a l k (Lam t) = (l . gfoldT v a l k . mapT k) t

showT :: Term String -> String
showT = gfoldT id showP (\ x -> lambda ++ x) showI
\end{verbatim}

\noindent Note that when \texttt{gfoldT} is used by \texttt{showT}, both \texttt{m} and
\texttt{n} are instantiated with \texttt{\string\ x . String}. Please see the file \texttt{examples/bird.hr} for the whole program. 
    
\subsection{Stump's impredicative merge sort}
It is well-known that $\mathbf{F}_\omega$ can support impredicative Church encoding. However, 
programming with Church encoding directly in $\mathbf{F}_\omega$ can be a bit cumbersome due to
the amount of type annotations that required.
In Haskell, we can avoid a lot of these annotations by using data constructor to perform explicit type conversion. The following program (from Jones \cite{jones1997first}) is typically what one would write in Haskell. Note that the extra data constructors \texttt{L} is used
to explicitly convert back and forth between type \texttt{List a} and type \texttt{(a -> b -> b) -> b -> b}.
 
\begin{verbatim}
data List a = L ((a -> b -> b) -> b -> b)

fold :: forall a b . List a -> (a -> b -> b) -> b -> b
fold (L f) = f

nil :: forall a . List a
nil = L (\c n -> n)

cons :: forall a . a -> List a -> List a
cons x xs = L (\c n -> c x (fold xs c n))

hd :: forall a . List a -> a
hd l = fold l (\x xs -> x) (error "hd []")

tl :: forall a . List a -> List a
tl l = fst (fold l c n)
  where c x (l,t) = (t, cons x t)
        n = (error "tl []", nil)  
\end{verbatim}

With the support of impredicative types, we can program with Church encoding without
explicit type conversion, so no extra data constructor is needed\footnote{ Please see \texttt{examples/church.hr} for more simple examples of Church encodings.}. 

\begin{verbatim}
type List :: * -> * = \ a . forall x . (a -> x -> x) -> x -> x

fold :: forall a b . List a -> (a -> b -> b) -> b -> b
fold l f n = l f n

nil :: forall a . List a
nil = \ c n -> n

cons :: forall a . a -> List a -> List a
cons = \ a as c n -> c a (as c n)

head :: forall a . List a -> a
head l = l (\ a r -> a) undefined

tail :: forall a . List a -> (List a) 
tail l =  fst (l (\ a r -> (snd r, cons a (snd r))) (nil, nil))
\end{verbatim}

Stump implemented a mini-language called \texttt{fore} for $\mathbf{F}_\omega$ in order to program with different lambda-encoding schemes \cite{stump2016efficiency}. The following is a merge sort program in \texttt{fore} using Church encoding (from \cite{stump2016efficiency}).   

\newpage
\begin{verbatim}
merge : forall A : *, (A -> A -> Bool) -> List A -> List A -> List A 
= \ A : *, \ cmp : A -> A -> Bool, \ la : List A,
  la (List A -> List A -> List A)
     (\ a : A , \ outer : List A -> List A -> List A,
        \ la : List A, \ lb : List A,
          head A la (List A)
           (\ ha : A,
             lb (List A -> List A)
              (\ b : A, \ inner : List A -> List A, 
               \ lb : List A,
                   head A lb (List A) 
                    (\ hb : A,
                       cmp ha hb (List A)
                         (Cons A ha (outer (tail A la) lb))
                         (Cons A hb (inner (tail A lb))))
                     la)
              (\ lb : List A, la)
               lb) 
            lb)
      (\ la : List A ,\ lb : List A, lb) 
      la.  
\end{verbatim}

 We can see there are a lot of type annotations required.
With the support of impredicative types, the above program can be simplified to the
following, where the only type annotation required is at the top level. Please see \texttt{examples/church-braun.hr} for the full definitions.  

\begin{verbatim}
merge :: forall a . (a -> a -> Bool) -> List a -> List a -> List a
merge f la = 
  la (\ laa lbb -> lbb)
     (\ a outer la' lb ->
           caseMaybe (head la')
              (\ ha ->
                  lb (\ lb'' -> la')
                     (\ b inner lb' ->
                        caseMaybe (head lb')
                           (\ hb ->
                               if (f ha hb) 
                                  (cons ha (outer (tail la') lb'))
                                  (cons hb (inner (tail lb'))))
                            la')
                     lb)
               lb)
     la 
\end{verbatim}
\end{document}